\newtheorem{lemma}{Lemma}
\theoremstyle{definition}
\theoremstyle{remark}
\title{\textbf{Syntactic Structure, Quantum Weights}}
\author{Kentaro IMAFUKU}
\affil{National Institute of Advanced Industrial Science and Technology (AIST)}
\date{\empty}
\begin{document}
\maketitle

\begin{abstract}
Why do local actions and exponential Euclidean weights arise so universally in classical, statistical, and quantum theories?
We offer a structural explanation from minimal constraints on finite descriptions of admissible histories.
Assume that histories admit finite, self-delimiting (prefix-free) generative codes that can be decoded sequentially in a single forward pass.
These purely syntactic requirements define a minimal descriptive cost, interpretable as a smoothed minimal program length, that is additive over local segments.
First, any continuous local additive cost whose stationary sector coincides with the empirically identified classical variational sector is forced into a unique Euler--Lagrange equivalence class.
Hence the universal form of an action is fixed by descriptional structure alone, while the specific microscopic Lagrangian and couplings remain system-dependent semantic input.
Second, independently of microscopic stochasticity, finite prefix-free languages exhibit exponential redundancy: many distinct programs encode the same coarse history, and this redundancy induces a universal exponential multiplicity weight on histories.
Requiring this weight to be real and bounded below selects a real Euclidean representative for stable local bosonic systems, yielding the standard Euclidean path-integral form.
When Osterwalder--Schrader reflection positivity holds, the Euclidean measure reconstructs a unitary Lorentzian amplitude.
\end{abstract}

\noindent\textbf{Keywords:} Prefix-free finite descriptions, Euler–Lagrange equivalence class, Redundancy of finite languages, Euclidean path-integral weigh, Representation-theoretic foundations

\section{Introduction}
\label{sec:intro}

How does a rule for describing an object influence the way the object appears?
Why do local actions and their exponential weights appear so universally,
viewed through the \emph{syntactic structure} of descriptions?

A striking commonality across classical mechanics, statistical field theory, and quantum theory is the appearance of a \emph{local additive
functional} on histories together with an \emph{exponential Euclidean weight}.  In Euclidean signature one typically encounters a functional
\begin{equation}
    S_E[x] \;=\; \int_{\tau_i}^{\tau_f}
      L_E\!\bigl(x(\tau),\dot x(\tau),\tau\bigr)\, d\tau ,
\end{equation}
and a weighting of the form
\begin{equation}
    P[x] \;\propto\; \exp[-S_E[x]/\hbar] .
\label{eq:intro-weight}
\end{equation}
These structures recur in systems with widely different microscopic content, suggesting that they may reflect general constraints on the
\emph{syntactic structure} by which histories are represented rather than detailed assumptions about particular dynamics.

This paper offers a structural account of the \emph{action--weight pair}.  We do not aim to change the empirical content of classical or
quantum theory, nor to propose new microscopic laws.  Instead we ask a more primitive question: \emph{what minimal, domain-independent
conditions on the describability of trajectories compel a local, additive functional together with its distinguished exponential
weight?}  Our results therefore concern the \emph{form} and universality of the action--weight structure, not the derivation of a unique
microscopic action or a numerical prediction of $\hbar$.

Analytic continuation from Lorentzian amplitudes $\exp(iS/\hbar)$ can motivate~\eqref{eq:intro-weight}\cite{osterwalder1973,osterwalder1975}, but such
arguments already presuppose a local action.  From a broader dynamical-systems viewpoint\cite{lanczos1949,arnold1989}, locality,
additivity, and variational structure are not inevitable: nonlocal update rules, global constraints, or dynamics lacking any action
principle are logically possible. Why, then, are admissible histories so widely organized by
(i) a local additive action and (ii) its exponential Euclidean weight?

\medskip
\noindent\textbf{Related approaches.}
Several programs address aspects of the action--weight pair. Stochastic approaches fall into two styles.
Nelson-type stochastic mechanics derives Schr\"{o}dinger dynamics from an assumed Brownian underpinning \cite{nelson1966,nelson1985},
while Parisi--Wu stochastic quantization yields Euclidean Gibbs weights as equilibrium measures of a fictitious Langevin time \cite{parisi1981}.
Both typically assume the action whose form they recover. Inference- or information-based approaches---maximum entropy \cite{jaynes1957a,jaynes1957b},
Fisher-information extremization \cite{frieden1998}, information-invariance \cite{reginatto1998,reginatto1999}, and
entropic dynamics \cite{caticha2011,caticha2012}---derive pieces of classical or quantum structure from constraints on inference, but usually take the admissible
class of continuum functionals as given---in particular, they start from a variational functional family (local, additive, continuum actions)
and then select or justify a member of that family by inference principles. Representation-first viewpoints
(e.g.\ Wheeler's ``It from Bit'' \cite{wheeler1989}, Tegmark's mathematical-universe hypothesis \cite{tegmark2008},
computational mechanics \cite{crutchfield1989,crutchfield2012}) have argued that physical regularities might reflect constraints on
representation or compression. 

We share this broad motivation, but our results hinge on a different, strictly syntactic starting point:
we impose minimal \emph{syntactic} well-formedness requirements on finite descriptions (prefix-freeness and sequential decodability), and show that
these alone \emph{force} two universal structures.
First, any continuous, additive, local descriptive cost compatible with minimal describability and with the empirically identified classical
stationary sector must lie in the Euler--Lagrange equivalence class of a local action.
Second, exponential redundancy of finite prefix-free languages induces a universal Euclidean exponential weight, without assuming microscopic
stochasticity.

\medskip
\noindent\textbf{Our perspective: histories are specified by finite, self-delimiting programs.}
We take a representation-theoretic premise \emph{prior} to any dynamical input:
admissible histories are treated as finitely describable objects, encoded by finite programs that are self-contained and stream-decodable, so that multiple
programs can be concatenated and parsed without external boundary markers. This minimal well-formedness requirement makes admissible descriptions
self-delimiting (in particular prefix-free) and online-decodable, with segment boundaries fixed by past symbols alone.  Accordingly,
\begin{quote}
\emph{Histories admit finite generative descriptions in a prefix-free,
sequentially decodable language.}
\end{quote}

This is a constraint on the \emph{description scheme}, not on microscopic physics: it fixes only the universal \emph{form} of admissible descriptive
costs, while leaving system-dependent content to later semantic/physical input. To model finite describability we work at fixed descriptive resolution and use
coarse-grained trajectories as analytic representatives; these are technical devices, not an effective-field-theory assumption.

Four domain-independent consequences follow. The first three are purely syntactic consequences of well-formed finite
descriptions; the fourth is an empirical anchoring that identifies the classical/stationary sector within that syntactic class:
\begin{itemize}
\item \textbf{local decomposition}: trajectories factor into typed local segments;
\item \textbf{local composition}: those segments recombine without ambiguity;
\item \textbf{sequential decodability}: segment boundaries are fixed by past information, with no lookahead;
\item \textbf{variational/classical correspondence}: coarse histories converge to differentiable trajectories.
\end{itemize}

\medskip
\noindent\textbf{From program length to a descriptive cost.}
Prefix-free syntax equips each trajectory with an integer minimal program length $|p|$. 
\emph{Prefix-free syntax constrains the \textbf{allowed form} of such descriptive costs, but does not by itself fix a unique one: the specific
cost $\ell$ realized in a given system reflects additional, system-dependent semantics.}
To admit variational analysis we smooth this into a continuous, segmentwise local functional $\ell[x]$, proportional (up
to a resolution-dependent scale) to minimal program length. By construction, $\ell[x]$ is a nonnegative descriptive \emph{cost}.

The key question is:
\begin{quote}
\emph{Given only syntactic locality and classical correspondence, which continuous additive local functionals can serve as descriptive costs?}
\end{quote}
The goal is not to extract a unique microscopic action from syntax, but to identify the universal \emph{form} shared by any finitely describable
dynamics with a well-defined classical/stationary sector.

Our first main result shows that these requirements force
\[
\ell[x]=\alpha\,S[x]+\text{(boundary terms)},\qquad \alpha\neq 0,
\]
where $S$ is a local additive functional in the Euler--Lagrange class selected by the stationary classical sector.
Here $\alpha$ is fixed only up to sign at the level of EL equivalence; the sign is fixed later when the redundancy weight is realized by a
bounded-below real representative. At this stage only the EL \emph{equivalence class} is fixed; no choice
of Euclidean or Lorentzian representative is assumed.

\medskip
\noindent\textbf{Redundancy of descriptions and the exponential weight.}
Finite generative languages exhibit exponential redundancy: for any finite prefix-free grammar with nonzero branching, the number
of syntactically valid strings of length $L$ grows as $\gamma^L$ for $\gamma>1$.  Many distinct programs thus encode the same coarse history.
Counting these redundant encodings yields a universal multiplicity weight
\[
    W[x]\propto\gamma^{-\ell[x]}
    =\exp[-(\ln\gamma)\,\ell[x]] .
\]
For this redundancy weight to be well formed on the same set of admissible histories where $\ell[x]$ is defined and nonnegative, the EL class must admit
a real bounded-below representative. For stable local bosonic systems, such a realization exists precisely in the real Euclidean sector; the redundancy construction therefore closes
on a Euclidean action $S_E$ and fixes the remaining sign freedom in $\ell=\alpha S+B$ to $\alpha>0$.
One then obtains
\[
    P[x]\propto \exp[-S_E[x]/\hbar_{\mathrm{eff}}],
    \qquad \hbar_{\mathrm{eff}}^{-1}=\alpha\ln\gamma .
\]
The exponential \emph{Euclidean} weight is therefore a consequence of redundant finite descriptions, not an independent physical postulate.
A direct identification of $\ell$ with the standard real-time Minkowskian action on Lorentzian histories is in general incompatible
with boundedness and plays no role in the construction.

\medskip
\noindent\textbf{Euclidean domain, OS positivity, and Lorentzian scope.}
The structural argument closes within the real Euclidean sector once redundancy is promoted to a positive, normalizable measure.
Osterwalder--Schrader (OS) reflection positivity \cite{osterwalder1973,osterwalder1975} is invoked only when one
wishes to reconstruct a unitary Lorentzian quantum theory from the resulting Euclidean measure; when OS positivity fails, the present
results still fix the Euclidean action--weight \emph{form} without guaranteeing its Lorentzian realization.

Most bosonic systems of interest admit real Euclidean representatives (e.g.\ scalar and gauge theories, as well as certain symmetry-reduced
(minisuperspace) gravitational models), so that their Euclidean path weights can be taken positive and reflection positive in the standard
OS setting \cite{glimm1987,rothe2012,jaffe2006}.
By contrast, fermionic systems and theories with intrinsically imaginary topological terms (such as generic $\theta$-terms or WZW-type
terms) typically lead to complex Euclidean effective actions or non-positive measures, and therefore require separate, case-specific
treatment before any probabilistic interpretation is available \cite{montvay1994,witten1979Theta,witten1983WZW,alexandru2022}.
While our framework flags these theories as requiring such scrutiny, we do not pursue a systematic classification here and leave it
to future work.

The overall logical flow of the paper is summarized in Fig.~\ref{fig:overview}. Starting from prefix-free, sequentially decodable generative programs at
finite descriptive resolution, we define the minimal program length $|p_x|:=\min_{p\to x}|p|$ and smooth it to a continuous segmentwise local
cost $\ell[x]$ at fixed descriptive resolution.
Section~\ref{sec:local} shows that any continuous additive local cost compatible with these syntactic requirements and with the classical
stationary sector must lie in the Euler--Lagrange equivalence class of a local action, $\ell[x]=\alpha S[x]+B$.
Section~\ref{sec:redundancy} then counts redundant programs encoding the same coarse trajectory;
exponential redundancy with rate $\Lambda=\ln \gamma$ implies a Euclidean weight
$P[x]\propto \exp(-S_E[x]/\hbar_{\mathrm{eff}})$ with $\hbar_{\mathrm{eff}}^{-1}=\alpha\Lambda$.
For theories satisfying OS reflection positivity, this Euclidean measure admits OS reconstruction to a unitary Lorentzian/Minkowskian amplitude
$\exp(+iS_M[x]/\hbar_{\mathrm{eff}})$. Section~\ref{sec:cosmo} finally provides a global consistency check by
comparing the emergent $\hbar_{\mathrm{eff}}$ with the cosmological value of $\hbar$.

\begin{figure*}[t]
  \centering
  \includegraphics[width=1\linewidth]{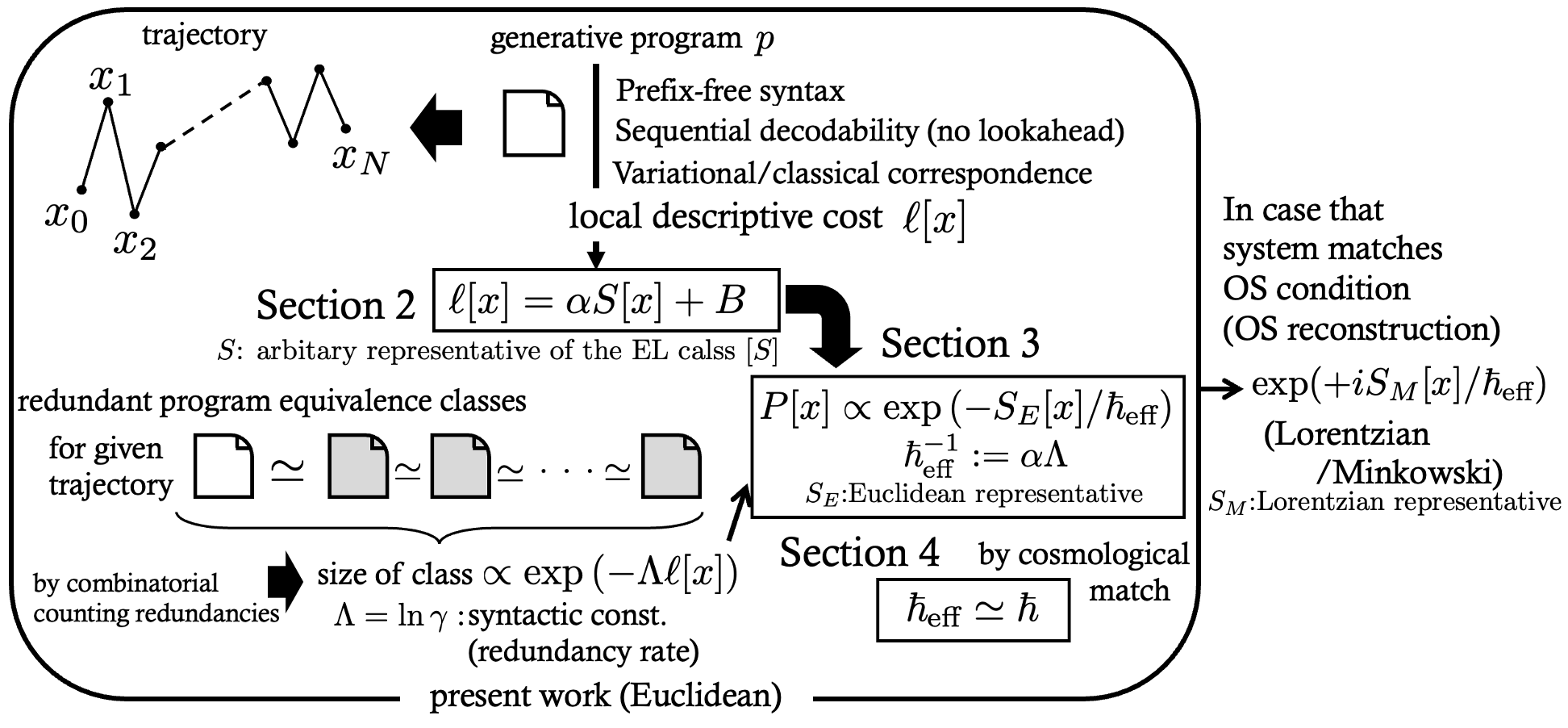}
\caption{Schematic overview of the syntactic framework and the structure of the paper.
Finite-resolution, prefix-free generative programs assign each trajectory a minimal program length $|p_x|:=\min_{p\to x}|p|$, and a continuous local
descriptive cost $\ell[x]$ obtained by smoothing $|p_x|$ at fixed descriptive resolution.
Section~\ref{sec:local} shows that syntactic locality together with variational/classical correspondence forces any continuous additive local
cost into the Euler--Lagrange equivalence class of a local action, $\ell[x]=\alpha S[x]+B$.
Section~\ref{sec:redundancy} counts redundant programs encoding the same coarse trajectory;
exponential redundancy with rate $\Lambda$ yields a Euclidean weight
$P[x]\propto \exp(-S_E[x]/\hbar_{\rm eff})$ with $\hbar_{\rm eff}^{-1}=\alpha\Lambda$.
When OS reflection positivity holds, OS reconstruction promotes this Euclidean theory to a unitary Lorentzian/Minkowskian amplitude
$\exp(+iS_M[x]/\hbar_{\rm eff})$. Section~\ref{sec:cosmo} compares the emergent $\hbar_{\rm eff}$ with the cosmological
value of $\hbar$, providing a global consistency check. The present work closes within the real Euclidean sector; Lorentzian
physics enters only conditionally through OS reconstruction.
\label{fig:overview}}
\end{figure*}

\medskip
\noindent\textbf{Cosmological calibration.}
The product $\alpha\Lambda$ remains undetermined by local structural arguments alone.  Laboratory systems offer no canonical global capacity,
but the observable Universe provides a unique reference scale through its holographic information content.  Matching descriptive cost to this
capacity yields $\hbar_{\mathrm{eff}}\simeq\hbar$ as a global consistency check rather than a microscopic derivation.

\medskip
\noindent\textbf{Summary.}
We propose that the robust pair
\[
(\text{local Euclidean action},\;\exp[-S_E/\hbar])
\]
arises from two generic structural features: (i) minimal syntactic constraints on admissible generative descriptions,
including sequential decodability; and (ii) exponential redundancy in finite prefix-free languages.
This perspective leaves standard physics unchanged while clarifying the universality of its action--weight structure and, for OS-positive
theories, its Lorentzian realization.

\section{Local descriptive costs and EL equivalence}
\label{sec:local}

We now make precise the representational premise announced in the Introduction.
Independently of any microscopic dynamics, we treat admissible histories as finitely describable objects specified by finite information.
To be finitely \emph{composable} and \emph{parsable} without external boundary markers, admissible programs must be self-contained and
stream-decodable in a single forward pass. Accordingly, we restrict attention to a prefix-free, sequentially
decodable generative language and to the minimal program length it induces on histories.

Prefix-freeness (self-delimitation) guarantees that a stream of programs can be uniquely segmented and concatenated without side information.
Sequential decodability requires that the boundary of each local segment be determined by past symbols alone, with no lookahead.
For brevity we refer to this no-lookahead property as \emph{syntactic causality}:
it is a statement about the \emph{parsing order of programs}, not a prior physical assumption about spacetime causality.

Given such a language at finite descriptive resolution, each history $x$ admits a minimal program length $|p_x|$.
Smoothing this integer length yields a continuous, additive local functional $\ell[x]$, which we call the descriptive cost.
The structural question is then:

\begin{quote}
\emph{Which continuous, additive, local functionals can serve as descriptive costs compatible with syntactic locality and with an
empirical classical stationary sector?}
\end{quote}

In this section we show that, under minimal assumptions motivated solely by syntactic locality and syntactic causality, any admissible descriptive
cost is forced to lie in a single Euler--Lagrange (EL) equivalence class. Thus an action-like functional appears not as a dynamical postulate but
as the unique continuous local \emph{EL form} compatible with prefix-free descriptive structure and the empirical classical limit.
No Euclidean or Lorentzian signature is selected here; only the EL class determined by the stationary sector is fixed.

The conceptual flow is:

\begin{itemize}
\item finite-resolution programs induce finite-resolution histories;
\item prefix-free syntax and syntactic causality induce a canonical segmentation into typed local pieces;
\item any admissible cost must therefore be local and additive;
\item empirical classical anchoring identifies a stationary sector of $\ell_n$ at each resolution;
\item a discrete EL-locality lemma then fixes the EL class of $\ell_n$, up to an overall factor and boundary terms.
\end{itemize}

Throughout this section continuum notation is used only as a stable analytic representative of fixed finite-resolution descriptions.

\medskip
\noindent
\textit{Remark (scope).}
For notational simplicity we present the argument for non-relativistic particle trajectories.
The proof relies only on (i) a local segmentation induced by prefix-free, sequentially decodable
syntax and (ii) variational/classical correspondence at fixed descriptive
resolution. Once a covariant notion of local cells/segments is specified, the same EL-equivalence and
locality conclusions extend verbatim to relativistic bosonic field theories.
A relativistic formulation is given in Appendix~\ref{app:relativistic}.

\subsection{Finite-resolution descriptions}

The resolution index $n$ models finite describability. It is not an effective-theory cutoff and makes no assumption about
intermediate EFT structure.

A program in a finite alphabet encodes only finite numerical information per generative step.
We model this by introducing a resolution index $n$ that fixes:
\begin{itemize}
\item a temporal discretization scale, and
\item the numerical precision at which each node is stored.
\end{itemize}

A resolution--$n$ history is a finite sequence
\[
x^{(n)}=(x^{(n)}_0,\ldots,x^{(n)}_{N_n}),
\qquad x^{(n)}_k\in\mathbb{R}^d,
\]
where each coordinate is stored to $n$ digits of precision in the program but treated analytically as a real variable.
Adjacent pairs define segments
\[
s_k^{(n)}:=(x^{(n)}_{k-1},x^{(n)}_k).
\]

This cleanly separates the finite syntactic layer from the smooth variational layer while retaining compatibility with discrete
variational calculus.

\subsection{Prefix-free and boundary-delimited programs}

A program generating $x^{(n)}$ has the form
\[
p=h(s_1^{(n)})\,q_1\,h(s_2^{(n)})\,q_2\cdots
  h(s_{N_n}^{(n)})\,q_{N_n},
\]
where each header $h(s_k)$ is drawn from a finite prefix-free set $\mathcal{H}_n$ and declares the \emph{syntactic type} of the segment,
while each payload $q_k$ carries its numerical data.

Prefix-freeness ensures instantaneous decodability: segment boundaries are recognizable from past symbols alone.
Appendix~\ref{app:prefixfree} shows that prefix-freeness and boundary delimitation are in fact necessary for
any well-defined local additive cost; without them segmentation is ambiguous.

The raw program length is
\[
|p|=\sum_{k=1}^{N_n}\bigl(|h(s_k)|+|q_k|\bigr).
\]

\subsection{Generalized descriptive cost}

For variational analysis we introduce a smoothed segmentwise local cost functional
\[
\ell_n[p]=\sum_{k=1}^{N_n}\ell^{\mathrm{loc}}_n(s_k^{(n)}),
\]
subject to:
\begin{itemize}
\item[(L1)] $\ell_n^{\mathrm{loc}}(s)$ is smooth and positive;
\item[(L2)] $\ell_n[p]$ is linearly equivalent to raw length: $A_n^{-1}|p|\le\ell_n[p]\le A_n|p|$;
\item[(L3)] $\ell_n^{\mathrm{loc}}$ depends only on the segment and its syntactic type.
\end{itemize}

The path cost is the minimum over programs generating the history:
\[
\ell_n[x^{(n)}]:=\min_{p\to x^{(n)}}\ell_n[p].
\]
By construction $\ell_n[x^{(n)}]\ge0$ is a descriptive \emph{cost}.

\subsection{Structural and anchoring assumptions}

Prefix-free locality implies that any admissible representation decomposes into local segments and recombines unambiguously.
We formalize this with two structural assumptions and two anchoring assumptions.

\paragraph{A1 (Additivity).}
\[
\ell_n[x^{(n)}]
=\sum_{k=1}^{N_n}\ell^{\mathrm{loc}}_n(s_k^{(n)}).
\]

\paragraph{A2 (Local stability).}
$\ell^{\mathrm{loc}}_n$ is smooth in its arguments and bounded below on compact sets, ensuring well-defined discrete EL operators.

\paragraph{A3 (Variational/Classical correspondence: empirical anchoring).}
Let $\mathcal{C}^{(n)}_{\rm emp}$ be the set of empirically identified coarse histories at resolution $n$.
We assume
\[
\mathcal{C}^{(n)}_{\rm emp}=\mathrm{Stat}(\ell_n),
\]
i.e.\ empirical coarse histories coincide (as a set in the abstract history space) with the stationary set of $\ell_n$.

\paragraph{A4 (Syntactic universality).}
Every admissible segment has at least one valid encoding in the grammar.

\medskip
\noindent
\textit{Remark.}
A3--A4 provide the minimal anchoring to the empirical classical sector. 
Appendix~\ref{app:strong-weight-classical} shows that A3 may equivalently be viewed as identifying empirical coarse histories with the dominant
stationary sector in a strong-weight/large-cost regime of the syntactic multiplicity measure.

\subsection{From the stationary sector to an EL class}

A central result in discrete analytical mechanics states that any second-order local difference equation admits a variational
formulation. Thus A3 guarantees the existence of a discrete local additive functional
$\mathcal{S}_n$ with local density $L^{\mathrm{loc}}_n$ such that
\[
\mathrm{Stat}(\mathcal{S}_n)=\mathrm{Stat}(\ell_n).
\]
Importantly, $\mathcal{S}_n$ is introduced only \emph{after} the stationary set is fixed; locality and variational structure are not
assumed at the outset. At this stage $\mathcal{S}_n$ is merely a representative of the EL class determined by
the stationary sector, with no signature specified.

\subsection{EL-locality lemma}

We next use a discrete EL-locality lemma (proved in Appendix~\ref{app:EL-locality-proof}):

\begin{lemma}[EL-locality]
Under A1 and A2, if $\ell_n$ and $\mathcal{S}_n$ share the same stationary histories,
then there exist $\alpha_n\neq0$ and a function $G_n$ such that
\[
\ell_n^{\mathrm{loc}}(s)
=\alpha_n\,L_n^{\mathrm{loc}}(s)+G_n(x_k)-G_n(x_{k-1})
\]
for every segment $s=(x_{k-1},x_k)$.
\end{lemma}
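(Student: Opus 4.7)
The plan is to compare the two additive functionals segment by segment through their discrete Euler--Lagrange (EL) operators, and then to invoke a discrete null-Lagrangian theorem for the difference. Fix an interior node $k$ and write the EL operators of the two local densities,
\[
E^k_{\ell}(u,v,w)=\partial_2\ell^{\mathrm{loc}}_n(u,v)+\partial_1\ell^{\mathrm{loc}}_n(v,w),\qquad
E^k_{\mathcal S}(u,v,w)=\partial_2 L^{\mathrm{loc}}_n(u,v)+\partial_1 L^{\mathrm{loc}}_n(v,w),
\]
smooth maps $\mathbb R^{3d}\to\mathbb R^d$ by A2. The hypothesis $\mathrm{Stat}(\ell_n)=\mathrm{Stat}(\mathcal S_n)$ asserts that at each interior node these operators share a common zero locus $\{E^k_\ell=0\}=\{E^k_{\mathcal S}=0\}$.

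The central reduction is to promote this common-zero statement to a pointwise identity $E^k_\ell=\alpha_n\,E^k_{\mathcal S}$ with a single non-zero global scalar $\alpha_n$. By A2 the Jacobian in the middle argument (the discrete Legendre matrix) is non-degenerate for both operators, so their zero loci are regular codimension-$d$ submanifolds and a Hadamard-type factorization produces $E^k_\ell=M_k(u,v,w)\,E^k_{\mathcal S}$ for a smooth invertible matrix-valued multiplier $M_k$. The Lagrangian structure then collapses $M_k$ to a scalar constant: because both $E^k_\ell$ and $E^k_{\mathcal S}$ are gradients at node $v$ of single-valued local densities, their Jacobians satisfy Helmholtz-type symmetry and cross-compatibility conditions in $(u,v,w)$; combining these with the identity $E^k_\ell=M_k E^k_{\mathcal S}$ and differentiating along the zero locus constrains $M_k$ to a scalar multiple of the identity, $M_k=\alpha_n^{(k)}(u,v,w)\mathbf 1$. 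Additivity (A1) at interior nodes shared by adjacent segments, together with single-valuedness of $\ell^{\mathrm{loc}}_n$ as a smooth density on segment data, then forces $\alpha_n^{(k)}$ to be independent of $(u,v,w)$ and of $k$, yielding a single global non-zero constant $\alpha_n$.

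With this scalar in hand, set $D(u,v):=\ell^{\mathrm{loc}}_n(u,v)-\alpha_n L^{\mathrm{loc}}_n(u,v)$; by construction its EL operator vanishes identically,
\[
\partial_2 D(u,v)+\partial_1 D(v,w)=0\quad\text{for all }u,v,w.
\]
The first term depends only on $(u,v)$ and the second only on $(v,w)$, so each coincides with $\pm\phi(v)$ for some smooth $\phi$. Integrating the two partial derivatives gives $D(u,v)=G_n(v)-G_n(u)+c_n$ with $G_n$ a primitive of $\phi$ and $c_n$ a constant absorbed into the normalization of $L^{\mathrm{loc}}_n$. This is the discrete null-Lagrangian identity, and substituting $u=x_{k-1}$, $v=x_k$ yields the claimed decomposition $\ell^{\mathrm{loc}}_n(s)=\alpha_n L^{\mathrm{loc}}_n(s)+G_n(x_k)-G_n(x_{k-1})$.

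The main obstacle is the reduction $M_k\to\alpha_n\mathbf 1$: the Hadamard factorization and the discrete null-Lagrangian computation are essentially mechanical, but promoting a matrix-valued, configuration-dependent multiplier to a single global constant scalar is the substantive technical content of the lemma. This is the discrete analogue of the classical uniqueness-of-Lagrangian result in the inverse problem of the calculus of variations, and it relies on the Helmholtz-type integrability encoded in single-valuedness of the local densities together with cross-segment consistency of the multiplier at shared interior nodes.
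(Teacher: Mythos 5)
Your proposal is correct and follows essentially the same route as the paper's Appendix~C proof: promote the common stationary set to a pointwise proportionality $E_\ell=\alpha_n E_{\mathcal S}$ of the discrete EL operators, then integrate the identically vanishing EL operator of the difference $\ell^{\mathrm{loc}}_n-\alpha_n L^{\mathrm{loc}}_n$ into a telescoping boundary term $G_n(x_k)-G_n(x_{k-1})$. If anything you are more explicit than the paper about the genuinely delicate step --- collapsing the a priori matrix-valued, configuration-dependent multiplier to a single global scalar via Helmholtz-type integrability and cross-node consistency --- and your separation-of-variables integration of $\partial_2 D(u,v)+\partial_1 D(v,w)=0$ is a cleaner version of the paper's $c=a$ substitution argument.
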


Summing over segments gives
\[
\ell_n[x^{(n)}]
=\alpha_n\,\mathcal{S}_n[x^{(n)}]+B_n(x_f,x_i),
\qquad
B_n(x_f,x_i)=G_n(x_f)-G_n(x_i).
\]
Hence $\ell_n$ and $\mathcal{S}_n$ lie in the same local EL equivalence class, written $[\mathcal{S}_n]$.

\medskip
\noindent
\textit{Orientation of the EL class.}
Because $\ell_n$ is a smoothed minimal program length, it is nonnegative as a descriptive cost.
This fixes the orientation of the EL class: we choose the representative $\mathcal{S}_n$ within $[\mathcal{S}_n]$ so that
$\ell_n=\alpha_n\mathcal{S}_n+B_n$ with $\alpha_n>0$. This step uses only the representational meaning
of $\ell_n$ as a cost. No boundedness, Euclidean-sector, or real-measure requirement is imposed on $\mathcal{S}_n$
at this stage; these conditions appear only in Section~\ref{sec:redundancy}.

\subsection{Continuum limit}

As $n\to\infty$, admissible coordinate values become dense and the families $\{\ell_n,\mathcal{S}_n\}$ converge to continuum representatives
$\ell[x]$ and $\mathcal{S}[x]$:
\[
\ell[x]=\alpha\,\mathcal{S}[x]+B(x_f,x_i),
\qquad \alpha\neq0.
\]

\emph{Thus Section~\ref{sec:local} fixes only the EL class $[\mathcal{S}]$ determined by the empirical stationary sector,
leaving open which real bounded-below representative (if any) will later be selected when redundancy induces a weight.}

\medskip
\noindent\textbf{Notational convention.}
Below we suppress the label $n$ and write $\ell[x]$ and $\mathcal{S}[x]$ for simplicity.
Unless stated otherwise, the structural results above do not require the strict limit $n\to\infty$: at any sufficiently fine fixed resolution the
descriptive cost remains in the same EL class up to the same overall factor and boundary terms.


\section{From Syntactic Redundancy to Quantum Weights}
\label{sec:redundancy}

Section~\ref{sec:local} fixed the \emph{Euler--Lagrange (EL) equivalence class} of the descriptive cost by matching stationary histories.
Concretely, every admissible path $x$ has a finite prefix-free, boundary--delimited description reflecting syntactic locality and
syntactic causality, and its minimal descriptive cost $\ell[x]$ lies in a unique local EL class:
\begin{equation}
\ell[x] \;=\; \alpha\,S[x] \;+\; B(x_f,x_i),
\label{eq:ell-S-proportional}
\end{equation}
where $\alpha\neq 0$ is a grammar--dependent proportionality constant, $B(x_f,x_i)$ is a boundary term depending only on endpoints, and
$S$ denotes an \emph{arbitrary local representative} in the EL class determined by the empirical stationary sector.
At the EL level no signature or boundedness is assumed: Section~\ref{sec:local} fixes only the class, not a distinguished member.

Here we take a logically new step.  We first fix the coarse-grained history space (i.e., the equivalence relation that defines when two descriptions represent the same admissible history).  ``Junk'' then refers to the redundant prefix-free programs within each such equivalence class.
Finite prefix-free languages contain a redundant syntactic sector: many distinct programs can encode the same coarse history.
We show that \emph{multiplicity counting alone} over this redundant sector induces a universal exponential redundancy weight
\begin{equation}
P[x] \;\propto\; \exp[-\Lambda\,\ell[x]],
\label{eq:P-exp-l-general}
\end{equation}
with redundancy exponent $\Lambda>0$ determined purely by the grammar.

\noindent\emph{Comment (what is new in this section).}
The exponential factor \eqref{eq:P-exp-l-general} is a purely structural multiplicity factor, derived from redundancy counting together with the
nonnegativity of $\ell$ as a descriptive cost. If one further requires this factor to define a \emph{real, finite measure over admissible histories},
then the EL class fixed in Section~\ref{sec:local} must admit a real bounded-below representative.
This realizability requirement does \emph{not} modify the EL class; it only selects a member of it.  For stable bosonic systems, that member
is uniquely Euclidean. Euclidean signature is thus \emph{selected}, not assumed.

The mechanism parallels standard counting arguments in information theory. In a finite prefix-free language the number of admissible strings of length $L$ grows exponentially \cite{cover2006,mcmillan1953}, and algorithmic thermodynamics derives Gibbs-like weights by counting programs under a length ceiling  \cite{imafuku_at}. Here the same exponential redundancy acts as a syntactic reservoir: summing over all compatible redundant (``junk'') completions induces a universal multiplicity weight on coarse histories, without assuming microscopic stochasticity.
\subsection{Redundant strings and an arbitrary length ceiling}

Physical (``core'') programs and redundant (``junk'') programs coexist within the same prefix-free syntax, reflecting a shared representational
infrastructure. Any full description naturally decomposes as
\[
p_{\mathrm{tot}} = p_{\mathrm{core}}\,p_{\mathrm{junk}},
\qquad \ell[p_{\mathrm{tot}}] = \ell[x] + \ell_{\mathrm{junk}}.
\]

To quantify the multiplicity of junk strings compatible with~$x$, we introduce an \emph{arbitrary length ceiling}~$K$:
\begin{equation}
\ell[x] + \ell_{\mathrm{junk}} \;\le\; K.
\label{eq:formal-K}
\end{equation}
The parameter~$K$ serves only to define the ensemble; all relative weights will be independent of~$K$, and the limit $K\to\infty$
does not affect the result.

Because junk strings inhabit a finite prefix-free language (and hence a finite-branching code tree), the number of admissible strings of
length~$L$ grows exponentially:
\begin{equation}
N_{\mathrm{junk}}(L) \;\sim\; \gamma^{L},\qquad \gamma>1.
\label{eq:junk-growth}
\end{equation}
This growth rate is a purely syntactic property of the junk sector and does not involve assumptions about the physical encoding.

\subsection{Induced measure from redundancy}

For a fixed physical path~$x$,
\[
L_{\max}(x)=K-\ell[x]
\]
is the longest permissible junk length. The number of compatible junk programs is therefore
\begin{equation}
W[x]\;\equiv\;\sum_{L\le L_{\max}(x)} N_{\mathrm{junk}}(L)
\;\approx\; \frac{\gamma^{L_{\max}(x)+1}}{\gamma-1},
\label{eq:W-def}
\end{equation}
where subleading $x$--independent terms are omitted.

Up to an $x$--independent factor,
\[
W[x]\;\propto\;\gamma^{\,L_{\max}(x)}\;=\;\gamma^{K-\ell[x]}\;=\;\gamma^{K}\,\gamma^{-\ell[x]}.
\]
Since $\gamma^K$ cancels under relative weighting, the redundancy factor is
\begin{equation}
P[x]\;\propto\;\gamma^{-\ell[x]} \;=\;\exp\!\left[-(\ln\gamma)\,\ell[x]\right].
\label{eq:P-exp-l}
\end{equation}

Only the redundancy rate of the junk sector enters here; the physical (core) sector may have a distinct alphabet or code rate.
In this sense the junk sector acts as a \emph{redundant syntactic reservoir}: marginalizing over it induces a Gibbs-like weight over coarse histories.

For convenience we introduce the redundancy exponent
\[
\Lambda \equiv \ln\gamma \;>\;0.
\]

\subsection{Bounded-below realization of the EL class}

Equation~\eqref{eq:P-exp-l} is purely structural: it follows from redundancy in a finite prefix-free language and holds for any descriptive
cost $\ell$ in the EL class fixed in Section~\ref{sec:local}. Requiring \eqref{eq:P-exp-l} to define a \emph{real finite measure} over admissible
histories \emph{then} forces the EL class to admit a representative compatible with $\ell[x]\ge 0$.
Since $\Lambda>0$, this requires a representative $S$ that is real and bounded below.

For stable local bosonic systems, a real bounded-below realization exists precisely in the Euclidean sector.
The redundancy construction therefore selects a real Euclidean representative $S_E$.
(When no such real bounded-below Euclidean representative exists---as in fermionic systems or theories with intrinsically imaginary topological
terms---the redundancy factor \eqref{eq:P-exp-l} remains a well-defined structural object, but its statistical interpretation requires additional,
case-specific input.)

With this Euclidean realization,
\begin{equation}
\ell[x] \;=\; \alpha\,S_E[x] \;+\; B(x_f,x_i),
\label{eq:ell-SE-proportional}
\end{equation}
and the remaining sign ambiguity in~\eqref{eq:ell-S-proportional} is fixed to $\alpha>0$.

\subsection{Emergence of the Euclidean quantum weight}

Substituting~\eqref{eq:ell-SE-proportional} into \eqref{eq:P-exp-l} yields
\[
P[x]\;\propto\;\exp\!\Bigl[-\Lambda\,(\alpha\,S_E[x] + B(x_f,x_i))\Bigr].
\]
The boundary term depends only on endpoints and therefore drops out of relative weights, giving
\begin{equation}
P[x]\;\propto\;\exp\!\left[-(\alpha\Lambda)\,S_E[x]\right].
\end{equation}

The emergent quantum scale thus satisfies
\[
\hbar_{\mathrm{eff}}^{-1} = \alpha\Lambda,
\]
yielding the Euclidean weight in standard form
\begin{equation}
P[x]\;\propto\;
\exp\!\left[-\frac{S_E[x]}{\hbar_{\mathrm{eff}}}\right].
\end{equation}

The Euclidean quantum measure therefore arises directly from \emph{syntactic redundancy} in a finite prefix-free generative language,
without invoking microscopic stochasticity or additional dynamical assumptions. Euclidean signature is not assumed; it is the unique real bounded-below
realization compatible with the redundancy weight and the cost meaning of~$\ell$.

\medskip
\noindent\textit{Remark (scale redundancy).}
The structural results of Sections~\ref{sec:local} and~\ref{sec:redundancy} depend on $\ell$, $\Lambda$, and $\alpha$ only through the product
$\alpha\Lambda$. Indeed, for any $c>0$ the simultaneous rescaling
\[
  \ell \;\to\; c\,\ell,\qquad
  \Lambda \;\to\; \Lambda/c,\qquad
  \alpha \;\to\; c\,\alpha
\]
leaves both the redundancy weight $P[x]\propto\exp[-\Lambda\,\ell[x]]$ and the emergent scale $\hbar_{\mathrm{eff}}^{-1}=\alpha\Lambda$ invariant.
Thus the overall normalization of the descriptive cost and of the redundancy exponent is a matter of convention; only their product is
structurally fixed. The cosmological discussion in Section~\ref{sec:cosmo} can therefore be viewed as choosing a particular normalization
within this scale freedom, using the Universe as an absolute reference system.

\subsection{Lorentzian scope}

Once a real Euclidean measure is obtained, Lorentzian quantum theory can enter only \emph{conditionally}.
For theories satisfying OS reflection positivity, the Euclidean measure reconstructed from~\eqref{eq:P-exp-l} admits OS
reconstruction to a unitary Lorentzian amplitude of the usual form $\exp(+iS_M/\hbar_{\mathrm{eff}})$.
When OS positivity fails, the present mechanism still explains the universality of the Euclidean action--weight \emph{form} but does not by
itself guarantee a Lorentzian realization.

\subsection{Summary}

\begin{itemize}
\item
Section~\ref{sec:local} fixed only the EL class of admissible descriptive costs by matching stationary histories; no signature or
boundedness was assumed there.

\item
Exponential redundancy in the junk sector of any finite prefix-free language implies that the multiplicity of descriptions
compatible with $x$ scales as $W[x]\propto\gamma^{-\ell[x]}$.

\item
Hence redundancy induces the universal exponential factor $P[x]\propto\exp[-\Lambda\,\ell[x]]$, independent of the arbitrary ceiling $K$.

\item
Requiring this factor to define a real normalizable weight selects a real bounded-below representative in the EL class.
For stable local bosonic systems this representative is uniquely Euclidean, giving
\[
P[x]\propto \exp[-S_E[x]/\hbar_{\mathrm{eff}}],\qquad\hbar_{\mathrm{eff}}^{-1}=\alpha\Lambda.
\]
\end{itemize}

Thus the Euclidean quantum weight emerges not as an additional postulate, but as a robust structural consequence of finite prefix-free generative
syntax and its intrinsic exponential redundancy.

For gauge systems, physical histories are gauge orbits rather than individual configurations. A finite prefix-free program, however, must output concrete configurations step by step, so any syntactic description implicitly fixes a representative on each orbit (a discrete analogue of gauge fixing). Counting the orbit–internal multiplicity then yields an additional orbit-dependent prefactor--analogous to a Faddeev-Popov determinant \cite{faddeev1967,faddeev1973}--without altering the universal exponential redundancy weight. Appendix \ref{app:FP-toy} illustrates this with a minimal toy model.


\section{Cosmological consistency check of the emergent Planck scale}
\label{sec:cosmo}

Sections~\ref{sec:local}--\ref{sec:redundancy} showed that a finite, prefix-free, sequentially decodable description language
induces a universal redundancy weight on admissible histories,
\[
P[x]\propto\exp\!\left[-\frac{S_E[x]}{\hbar_{\mathrm{eff}}}\right],
\qquad
\hbar_{\mathrm{eff}}^{-1}=\alpha\Lambda,
\]
where $\alpha>0$ relates the minimal descriptive cost to a Euclidean representative in the EL class, and $\Lambda=\ln\gamma$ is
the redundancy exponent of the grammar.  The syntactic framework fixes the \emph{form} of the weight
but leaves the absolute scale $\alpha\Lambda$ undetermined. This section provides an external calibration of that scale.

\medskip
\noindent\textbf{Status of this section.}
One could fix $\hbar_{\mathrm{eff}}$ by directly matching the emergent weight to the standard Euclidean path-integral form, i.e.\ by imposing
$\hbar_{\mathrm{eff}}=\hbar$ as empirical input.  While legitimate, such a direct matching would not constitute an independent check on the
product $\alpha\Lambda$.  Instead, we calibrate $\alpha\Lambda$ using only classical de~Sitter estimates and the holographic information bound,
without assuming the Euclidean quantum measure whose normalization is at issue.

\medskip
\noindent\textbf{Why cosmology can provide an absolute calibration.}
For ordinary subsystems there is no invariant notion of total descriptive capacity: the representational environment can always be
enlarged, so no canonical upper bound on total program length exists. 
Consequently $\alpha\Lambda$ cannot be fixed from laboratory systems alone.

The observable Universe is exceptional.
Its cosmic event/hubble horizon provides a geometric, essentially model-independent upper bound on total
information content via the holographic principle \cite{bekenstein1973,hawking1975,thooft1993,susskind1995}.
Thus cosmology is the only known setting where a \emph{finite absolute} descriptive budget can be compared to a \emph{finite absolute}
(on-shell) Euclidean action.
We therefore compare $\ell[\mathrm{univ}]$ with $S_{E,\mathrm{univ}}$ and treat the result as an order-of-magnitude
consistency check on $\alpha\Lambda$, not as a first-principles derivation of $\hbar$.

\medskip

\subsection{Physical assumptions}

We restrict attention to standard late-time cosmology and list the inputs used in the calibration. 
These are classical/holographic assumptions and add no dynamical hypotheses beyond those already fixed
by the syntactic analysis:

\begin{description}

\item[(U1) Flat, critical-density FRW cosmology.]
We adopt a spatially flat FRW universe with present Hubble parameter $H_0$ and total energy density
fixed at the critical value $\rho_c=3H_0^2/(8\pi G)$ \cite{mukhanov2005,dodelson2003}.

\item[(U2) Holographic information capacity.]
The observable Universe is bounded by a Hubble horizon of radius $R_H=H_0^{-1}$ and area
\[
A_H=4\pi R_H^2.
\]
Its maximal information capacity is the holographic value
\[
I_{\mathrm{holo}}
=\frac{A_H}{4\ell_p^2}
=\frac{\pi}{\ell_p^2 H_0^2},
\qquad
\ell_p^2=G\hbar.
\]

\item[(U3) On-shell Euclidean action of de~Sitter.]
Euclidean de~Sitter space has period $\beta_{\mathrm{dS}}=2\pi/H_0$ \cite{gibbons1977}.
The total energy within the Hubble volume is
\[
E_U
=\rho_c\frac{4\pi}{3H_0^3}
=\frac{1}{2GH_0}
=\frac{\hbar}{2\ell_p^2H_0}.
\]
Thus the on-shell Euclidean action associated with the de~Sitter horizon is
\[
S_{E,\mathrm{univ}}
  =\beta_{\mathrm{dS}}E_U
  =\hbar\,I_{\mathrm{holo}}.
\]
(Throughout this section $S_{E,\mathrm{univ}}$ denotes this standard de~Sitter/CEH on-shell Euclidean action.)

\item[(U4) Self-description as a calibration ansatz.]
Let $\ell[\mathrm{univ}]$ denote the minimal prefix-free description of
the Universe's \emph{entire admissible Euclidean history} at the given descriptive resolution.
This is a self-contained program generating a full history in the abstract admissible-history space;
it is \emph{not} a record of what any internal observer will eventually access.

Because the Euclidean Universe is compact in imaginary time, boundary terms in~\eqref{eq:ell-SE-proportional} cancel automatically.
At cosmological scales it is natural to assume that all sectors of the Universe---dynamical rules, initial data, and any redundant syntactic
structure---are encoded using a \emph{common} finite descriptive resource, hence share a single redundancy exponent $\Lambda$.
We therefore adopt the calibration ansatz that the Universe's minimal self-description generically occupies its holographic
capacity up to that coding-efficiency scale:
\[
\ell[\mathrm{univ}] = \frac{I_{\mathrm{holo}}}{\Lambda}.
\]
Allowing a constant usage fraction $\eta\le 1$ rescales this to
\[
\ell[\mathrm{univ}] = \frac{\eta\,I_{\mathrm{holo}}}{\Lambda}.
\]
Further motivation for viewing (U4) as a natural consistency condition for a closed self-describing Universe is given in
Appendix~\ref{app:self-description}.
\end{description}

Taken together, (U1)--(U4) permit an absolute comparison of the total syntactic cost of the Universe with its de~Sitter Euclidean action.

\subsection{Fixing the proportionality constant}

From Section~\ref{sec:local},
\[
\ell[\mathrm{univ}] = \alpha\,S_{E,\mathrm{univ}}.
\]

Combining (U3) and (U4) gives
\[
\alpha S_{E,\mathrm{univ}}
  = \frac{I_{\mathrm{holo}}}{\Lambda},
\qquad
S_{E,\mathrm{univ}} = \hbar\,I_{\mathrm{holo}}.
\]
Eliminating $I_{\mathrm{holo}}$ yields
\[
\alpha\hbar=\frac{1}{\Lambda},
\qquad\Rightarrow\qquad
\alpha=\frac{1}{\Lambda\,\hbar}.
\]
With a usage fraction $\eta<1$,
\[
\alpha=\frac{1}{\eta\,\Lambda\,\hbar}.
\]

\subsection{Determination of $\hbar_{\mathrm{eff}}$}

From Section~\ref{sec:redundancy},
\[
\hbar_{\mathrm{eff}}^{-1}=\alpha\Lambda.
\]
Substituting the cosmologically determined value of $\alpha$ gives
\[
\hbar_{\mathrm{eff}}^{-1}
  = \Lambda \cdot \frac{1}{\Lambda\hbar}
  = \frac{1}{\hbar},
\qquad\Rightarrow\qquad
\hbar_{\mathrm{eff}}=\hbar.
\]
With a usage fraction $\eta<1$, one finds
$\hbar_{\mathrm{eff}}=\hbar/\eta$.

\subsection{Interpretation}

The equality $\hbar_{\mathrm{eff}}=\hbar$ is not a derivation of Planck’s constant.
Rather, it shows that the syntactic--redundancy mechanism admits a natural absolute calibration: if a closed Universe uses its holographic
descriptive capacity up to its intrinsic coding efficiency,
then the resulting emergent quantum scale is consistent with the observed value of $\hbar$.

\subsection{Summary}

\begin{itemize}
\item The emergent quantum scale satisfies $\hbar_{\mathrm{eff}}^{-1}=\alpha\Lambda$.
\item The observable Universe uniquely provides both a total information capacity (via the holographic bound) and a horizon-normalized Euclidean on-shell action.
\item Adopting near-saturation of holographic capacity as a calibration ansatz fixes $\alpha=1/(\Lambda\hbar)$ (up to $\eta$).
\item Consequently $\hbar_{\mathrm{eff}}=\hbar$ (up to order-unity corrections from incomplete capacity usage).
\end{itemize}

In short, the Planck scale appears here not as a prediction, but as the unique value compatible with de~Sitter action, holographic capacity, and
the redundancy structure of finite prefix-free generative descriptions of the Universe’s admissible history.


\section{Discussion and Outlook}
\label{sec:discussion}

The analysis presented here supports a unified perspective in which classical dynamics, quantum weighting, and even cosmological scales
reflect structural properties of finite generative descriptions rather than independent dynamical postulates.
The central shift is representational: the action--weight pair need not be taken as a primitive element of physics.
Instead, minimal well-formedness requirements on finite, prefix-free, sequentially decodable descriptions fix a universal
\emph{structural class} of admissible local descriptive costs and their redundancy-induced weights.

A key point is a clean separation between two layers. At the \emph{universal syntactic layer}, representation determines the
Euler--Lagrange (EL) equivalence class of admissible local additive costs and the form of the redundancy weight.
At the \emph{system-specific semantic/physical layer}, different microscopic theories select particular representatives within that class
(specific Lagrangians, couplings, and sectors). Euclidean signature is not assumed in the EL-class determination; it
appears only when the redundancy weight is required to be a real, bounded-below measure compatible with $\ell$ as a nonnegative cost.
Lorentzian quantum theory is not assumed at any stage; it enters only conditionally, when OST reflection positivity allows the resulting real
Euclidean measure to reconstruct to a unitary Lorentzian amplitude.

In short: \emph{descriptive syntax fixes the EL class of costs; redundancy fixes the weight form; boundedness selects a real Euclidean
representative; system-specific physics chooses a member of that class; and cosmic information bounds calibrate the overall scale.}

This section summarizes the conceptual structure, clarifies interpretational points, and indicates several directions for further development.

\subsection{Summary of conceptual structure}

The results can be grouped into four mutually reinforcing components:

\begin{itemize}

\item \textbf{(Syntactic locality and causality).}
A finite prefix-free, boundary--delimited grammar enforces a canonical decomposition of any finite-resolution trajectory into locally typed
segments. Any compatible continuous cost functional must therefore decompose into segmentwise contributions.
This locality is representational---a constraint on how histories can be described---rather than a dynamical assumption.

\item \textbf{(Variational/Classical correspondence).}
Requiring that empirically identified coarse trajectories be stationary points of the descriptive cost forces that cost to lie in a unique local
EL equivalence class:
\[
\ell[x]=\alpha S[x]+B,\qquad \alpha>0.
\]
At this level only the EL class is fixed, not a specific signature or microscopic representative.
Thus an action-like local functional appears not as an axiom, but as the unique additive local form compatible with descriptive locality and the
classical stationary sector.

\item \textbf{(Statistical redundancy).}
Finite generative grammars exhibit exponential redundancy: the number of admissible strings of length $L$ grows as $\gamma^L$ with
$\gamma>1$. Counting redundant descriptions of a fixed physical path gives the structural multiplicity factor
\[
P(x)\propto\gamma^{-\ell[x]}
=\exp[-\Lambda\,\ell[x]],\qquad \Lambda=\ln\gamma>0.
\]
For this multiplicity factor to define a well-formed weight on the same history space as the cost $\ell[x]\ge 0$, the EL class must be identified
by a real representative bounded below. For stable local bosonic systems such a realization exists precisely in
the real Euclidean sector, so the redundancy construction closes on a Euclidean action $S_E$ and yields the Euclidean quantum weight
$\exp[-S_E[x]/\hbar_{\mathrm{eff}}]$ with $\hbar_{\mathrm{eff}}^{-1}=\alpha\Lambda$.

\item \textbf{(Cosmological calibration).}
Identifying the descriptive capacity required to encode the Universe with its holographic information bound fixes $\alpha\Lambda$ in standard
cosmology, leading to $\hbar_{\mathrm{eff}}\simeq\hbar$. This is a global consistency check of the syntactic mechanism, not a
derivation of Planck’s constant.

\end{itemize}

\medskip
\noindent\textit{Scope.}
The first three components close entirely within the real Euclidean sector and require no OS assumptions.
OS reflection positivity is invoked only when one wishes to promote the resulting Euclidean measure to a unitary
Lorentzian quantum theory; when OS positivity fails, the present results still fix the Euclidean action--weight \emph{form}, but Lorentzian
unitarity is not structurally ensured.

Taken together, these components indicate that the Euclidean weight is not an auxiliary postulate but a \emph{structural consequence} of
descriptive locality, variational/classical correspondence, and redundancy in finite prefix-free generative languages.

\subsection{Interpretational comments}

The framework admits a compact conceptual summary:

\begin{quote}
\emph{Classical stationary structure and Euclidean quantum weights arise from informational constraints
imposed by finite, prefix-free generative syntax.}
\end{quote}

Two clarifications are especially important.

\medskip
\noindent\textbf{EL class vs.\ signature selection.}
The syntactic argument determining the EL class of $\ell$ is \emph{independent of any choice of signature}: it refers only to local
additive costs and their stationary sector. Signature enters only when the redundancy weight is required to be
real and bounded below. Because $\ell$ is a nonnegative descriptive cost and the redundancy rate
$\Lambda$ is positive, boundedness and normalizability imply that only a real Euclidean representative can realize the EL class compatibly.
A direct identification of $\ell$ with the standard real-time Minkowskian action on Lorentzian histories is in general incompatible with this
boundedness and is neither required nor implied by the framework.

\medskip
\noindent\textbf{Weight reflects multiplicity, not ignorance.}
No epistemic uncertainty is introduced. A physical trajectory may be encoded by a single minimal program.
Quantum weights arise because many syntactically distinct programs generate the same coarse history.
The weight reflects multiplicity within the descriptive ensemble, not lack of knowledge about the underlying trajectory.

\medskip
\noindent\textbf{Continuum not fundamental.}
Resolution-$n$ discrete trajectories are the basic objects. The action-like functional appears only as the stable continuum
representative of syntactically constrained discrete costs. Thus quantum structure emerges from redundancy in a finite grammar, not
from a pre-imposed continuum ontology.

\medskip
\noindent\textbf{Universality of $\hbar$.}
The cosmological argument shows that the syntactic mechanism \emph{can} be calibrated against a unique global information bound.
It does not by itself explain why all subsystems share the same $\hbar$;
that universality is naturally interpreted as a consequence of subsystems inheriting their descriptive resources
from the global cosmic grammar (Appendix~\ref{app:self-description}).

\paragraph*{Relation to Solomonoff priors.}

Given $\ell[x]=\alpha S_E[x]+B$, one might consider the Solomonoff prior $P(x)\propto2^{-\ell[x]}$ \cite{solomonoff1964a, solomonoff1964b,LiVitanyi2008}.  
Although this leads formally to an exponential weight, it reverses the logical order.
Solomonoff’s prior \emph{assumes} a coding prior $2^{-|p|}$; the choice of base~2 is conventional.
In contrast, the present framework does not posit a prior: the exponential weight is a \emph{derived} consequence of the intrinsic
redundancy exponent $\gamma$ of the generative grammar. Thus $\Lambda=\ln\gamma$ is structural, not a selectable convention.

\subsection{Global structure, precision, and thermodynamic extensions}

Several natural extensions suggest themselves.

\paragraph*{(E1) Global syntactic structure and Lorentzian physics.}
OS positivity is one sufficient route, not a premise of the framework. While local syntactic constraints align cleanly with Euclidean
variational principles, Lorentzian amplitudes involve global analytic features---contours, complex saddles, Stokes phenomena, topological
sectors---that may require enlarging the grammar with nonlocal markers.

\paragraph*{(E2) Resolution, precision, and syntactic RG.}
The coefficients $\alpha_n$ describe how descriptive costs rescale under refinement.
Their flow defines a renormalization structure on grammars reminiscent of, though distinct from, Wilsonian RG.
Such a “syntactic RG’’ could clarify how continuum actions emerge from coarse descriptive primitives.

\paragraph*{(E3) Redundancy and entropy.}
Redundant programs behave as microstates compatible with a fixed macro-history.
The redundancy exponent $\gamma$ plays the role of an algorithmic temperature.
Connections to gravitational entropy and holography are immediate and merit further exploration.

\subsection{Outlook}

The central message is that locality, additivity, prefix-free syntax, variational/classical correspondence, and redundancy impose powerful constraints on
admissible physical laws. Within these constraints, both the action-like EL class and the Euclidean weight arise from the internal logic of finite generative
descriptions.

Future work---incorporating global syntactic structure, syntactic renormalization, and thermodynamic interpretations of redundancy---may
help develop a unified framework in which classical behavior, quantum fluctuations, and informational geometry are realized as facets of a
single structural principle.


\section{Conclusion}

We have shown that finite prefix-free, sequentially decodable generative languages impose strong constraints
on the admissible representation of histories.
These constraints fix the \emph{universal form} of the action--weight pair in two steps.
First, any continuous additive local descriptive cost compatible with a classical stationary sector lies in a unique local EL equivalence class,
\[
\ell[x]=\alpha S[x]+B,\qquad \alpha\neq 0,
\]
where only the class (not a signature-specific representative) is fixed
structurally.
Second, exponential redundancy in finite grammars induces the universal multiplicity weight
$P[x]\propto \exp[-\Lambda\ell[x]]$ with $\Lambda>0$.
Because $\ell[x]\ge 0$ is a descriptive \emph{cost} and $\Lambda>0$, this multiplicity factor defines a well-formed weight
only if the shared EL class is realized by a \emph{real, bounded-below} representative on the same history space.
At the level of EL equivalence the proportionality $\ell=\alpha S+B$ fixes $\alpha$ only up to sign; requiring a bounded-below
real realization therefore both closes the construction on the real Euclidean representative $S_E$
(for stable local bosonic systems) and simultaneously fixes the remaining sign freedom to $\alpha>0$.
One thus obtains
\[
P[x]\propto \exp[-S_E[x]/\hbar_{\rm eff}],
\qquad
\hbar_{\rm eff}^{-1}=\alpha\Lambda .
\]

Equally important is what the syntactic framework does \emph{not} determine.
The detailed microscopic representative---the specific Lagrangian, couplings, and sectors of each physical system---is a semantic/physical
input that selects a member of the universal EL class.
Local structural arguments also determine only the combination $\hbar_{\rm eff}^{-1}=\alpha\ln\gamma$; comparing cosmic descriptive cost
with the holographic information bound provides a natural global consistency calibration, yielding $\hbar_{\rm eff}\simeq\hbar$.

Lorentzian physics enters only conditionally. When OS reflection positivity holds, the Euclidean
measure obtained here reconstructs to a unitary Lorentzian/Minkowskian quantum theory with amplitude $\exp(+iS_M/\hbar_{\rm eff})$.
When OS positivity fails, the present results still explain the universality of the Euclidean action--weight \emph{form} but do not by
themselves guarantee its Lorentzian realization.

Several directions remain open, including extensions to intrinsically complex Euclidean actions (fermions and generic topological terms),
a systematic treatment of global syntactic structure and Lorentzian contours, and the relationship between syntactic redundancy, RG flow,
and gravitational entropy.

\section*{Acknowledgments}
I am grateful to my colleagues at AIST for their various forms of support.
This study benefited from the use of ChatGPT-5 (OpenAI) for preliminary literature exploration, assistance in constructing illustrative examples, and for improving the manuscript's structure and English language editing.
All scientific judgments and the final content are the author’s responsibility.

\appendix
\section*{Appendix}

\section{A Necessity of Prefix-Freeness for Local Additive Cost}
\label{app:prefixfree}

This appendix records a small structural fact used implicitly throughout Section~\ref{sec:local}: \emph{a local additive cost is compatible with
a syntactic representation only if the set of segment headers is prefix-free and boundary-delimited.}

Intuitively, prefix-freeness prevents ambiguity in the segmentation of a program.  Without it, the same raw string could be parsed into
segments in multiple ways, and a segmentwise functional $\ell_n^{\mathrm{loc}}(s)$ would cease to be well defined.
The lemma below formalizes this minimal requirement.  In particular, it shows that
prefix-freeness and explicit boundary markers are precisely what make segment boundaries recognizable without lookahead,
implementing the syntactic causality assumed in the main text.

\begin{lemma}[Prefix-free necessity]
\label{lem:prefixfree-necessary}
Let $\mathcal{H}_n$ be the set of headers used to mark segment boundaries at resolution~$n$.
Assume:
\begin{enumerate}
\item[(i)] Any program $p$ describing a path is assigned a cost
  $\ell_n[p]$ of the form
  \[
     \ell_n[p] =
     \sum_{k=1}^{N_n}
        \ell_n^{\mathrm{loc}}(s_k^{(n)}),
  \]
where the segments $s_k^{(n)}$ are obtained by syntactic parsing of~$p$;

\item[(ii)] The local functional $\ell_n^{\mathrm{loc}}$ depends only on the segment $s=(x^{(n)}_{k-1},x^{(n)}_k)$ and its syntactic type,
as declared by its header $h\in\mathcal{H}_n$.
\end{enumerate}
Then $\mathcal{H}_n$ must be prefix-free and forbidden from appearing as a substring of any payload.  Equivalently, segment boundaries must
be recognizable without context or lookahead.
\end{lemma}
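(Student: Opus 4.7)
The plan is to argue by contradiction in two steps, addressing prefix-freeness and the payload-substring condition separately, each time exhibiting a single raw string admitting two distinct parses whose segmentwise costs disagree. The guiding principle is that assumption (i) requires $\ell_n[p]$ to be computed from ``the'' parsed segments, so any ambiguity in parsing immediately threatens well-definedness of the functional, and assumption (ii) ensures that different parses produce different totals whenever the type labels they assign differ.

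First I would assume $\mathcal{H}_n$ is not prefix-free: there exist headers $h_1,h_2\in\mathcal{H}_n$ with $h_1$ a strict prefix of $h_2$, say $h_2=h_1 u$ for some nonempty string $u$. Using assumption (ii), together with A4 (every admissible segment has a valid encoding), I would construct a raw program $p$ whose initial bits are $h_2$ followed by a payload $q$ and a terminating continuation. This string admits two parses: (a) the leading header is $h_2$, producing a segment of declared type $h_2$; (b) the leading header is $h_1$, with $u$ absorbed into the payload of a segment of declared type $h_1$ which then continues into $q$. Both parses can be extended to complete segmentations of the same full string. Since $\ell_n^{\mathrm{loc}}$ depends on the declared type, the two admissible parses generically give different values of $\sum_k \ell_n^{\mathrm{loc}}(s_k^{(n)})$, contradicting the single-valuedness of $\ell_n[p]$ demanded by (i). Hence $\mathcal{H}_n$ must be prefix-free.

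Second I would show that headers cannot appear as substrings of any payload. Suppose some $h\in\mathcal{H}_n$ occurs inside a payload $q_k$ of a valid encoding $p=\cdots h(s_k)\,q_k\cdots$. Then one could re-parse $p$ by treating the internal occurrence of $h$ as the start of a new segment and truncating $q_k$ at that position; this yields a distinct segmentation of the same raw string, with a different segment list and hence a different additive cost. The same ambiguity argument as above produces a contradiction. Combining both parts, segment boundaries are recognizable from past symbols alone, which is exactly the syntactic causality invoked in Section~\ref{sec:local}.

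The main obstacle, as I see it, is the step asserting that the two admissible parses produce \emph{genuinely distinct} totals rather than accidentally equal ones. This is not automatic from (i)--(ii) alone: if $\ell_n^{\mathrm{loc}}$ were constant across types, the ambiguity would have no observable consequence and the lemma would be vacuous. I would therefore add an explicit nondegeneracy clause---that the cost assignment is not constant across the header alphabet, which is the whole point of typed segments---and use it to select $h_1,h_2$ (or the payload-embedded $h$) so that the two parses demonstrably disagree on the cost of at least one segment. A minor subsidiary task is to check that both parses in each construction terminate on syntactically complete programs generating some common admissible history; this follows from A4 and the finite-resolution setup, and is a concrete combinatorial construction rather than a deep issue.
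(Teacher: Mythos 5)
Your proposal is correct and follows essentially the same route as the paper's proof: a contradiction argument exhibiting two distinct parses of one raw string, first from a strict-prefix pair of headers and then from a header embedded in a payload, so that the additive cost of assumption (i) becomes multi-valued. The nondegeneracy worry you flag is real but is handled (tersely) in the paper by the clause that the two parses disagree ``unless $\ell_n^{\mathrm{loc}}$ is identically constant,'' which is read as contradicting the type-dependence in (ii) --- i.e.\ the paper implicitly assumes exactly the nonconstancy you propose to make explicit.
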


\begin{proof}
Suppose $\mathcal{H}_n$ were \emph{not} prefix-free. Then there exist $h,h'\in\mathcal{H}_n$ with $h$ a strict prefix of
$h'$.  Consider the concatenation
\[
    p = h'q\cdots,
\]
for some payload $q$. Because $h$ is a prefix of $h'$, $p$ can be syntactically parsed in two distinct ways:
\[
    p =
    \underbrace{h'}_{\text{header}}q\cdots,
    \qquad \text{or} \qquad
    p =
    \underbrace{h}_{\text{header}}
    \underbrace{(\text{remainder of }h' \text{ as payload})}_{\text{payload}}
     q\cdots.
\]
These two parses yield different segment decompositions of the same
program string. By assumption~(i), $\ell_n[p]$ is the sum of local costs over the
segments defined by the parse.  Thus $\ell_n[p]$ acquires two different values unless $\ell_n^{\mathrm{loc}}$ is
identically constant, contradicting assumption~(ii) that it depends only on the syntactic type of each segment.

A similar ambiguity arises if a header $h\in\mathcal{H}_n$ is allowed to appear within a payload
string~$q_k$: the boundary between segments would become context-dependent, and the same raw
program string would admit multiple valid decompositions, again making $\ell_n[p]$ ill-defined.

Therefore unambiguous segment boundaries---and hence prefix-freeness and boundary delimitation---are \emph{necessary}
conditions for the existence of any well-defined local additive cost functional of the form assumed in Section~\ref{sec:local}.
\end{proof}

This lemma shows that prefix-free, boundary-delimited syntax does not represent an extra physical assumption.
It is the minimal structural requirement that allows one to assign a local, segmentwise cost functional to a program and to apply discrete
variational calculus under the syntactic causality (no-lookahead) condition used in the main argument.


\section{Classical limit as a strong-weight / large-cost limit}
\label{app:strong-weight-classical}

This appendix provides an alternative reading of assumption A3 in the main text.
Its purpose is \emph{not} to introduce any new dynamical postulate, nor to assume a Euclidean or Lorentzian representative.
Rather, we show that once the redundancy-induced multiplicity factor $\exp(-\Lambda\ell)$ is in place, a ``classical'' regime can be identified
\emph{purely at the level of the syntactic cost} $\ell$ as a regime in which the induced weight concentrates near stationary histories.
Assumption A3 may then be viewed simply as the empirical identification of which stationary sector is realized in nature.

\medskip
\noindent\textit{Comment (what is assumed here).}
The multiplicity weight derived in Section~\ref{sec:redundancy} relies on the fact that $\ell[x]$ is a (smoothed) minimal program length.
Hence $\ell[x]\ge 0$ and is bounded below \emph{by construction}. This bounded-below property is part of the syntactic definition of $\ell$, and is
\emph{not} a Euclidean-signature assumption.
At this stage we work entirely with the abstract history space defined by the grammar and with its intrinsic cost $\ell$.

\subsection{Redundancy-induced weight without assuming A3}
\label{app:strong-weight-classical:redundancy}

Section~\ref{sec:redundancy} shows that for any finite prefix-free,
sequentially decodable generative language with exponential redundancy,
the multiplicity of redundant (junk) strings compatible with a coarse history induces the universal factor
\begin{equation}
P[x]\;\propto\;\gamma^{-\ell[x]}
\;=\;\exp\!\big(-\Lambda\,\ell[x]\big),
\qquad \Lambda\equiv \ln\gamma>0,
\label{eq:app_universal_weight}
\end{equation}
where $\ell[x]$ is the minimal description-length cost of $x$.
This result is purely syntactic and does \emph{not} invoke A3 or any assumption about classical dynamics.

We emphasize the logical status of~\eqref{eq:app_universal_weight}:
it is a \emph{structural multiplicity weight} on the abstract history space. No signature-specific action has been chosen yet.

\subsection{Strong-weight / large-cost regimes select stationary histories}
\label{app:strong-weight-classical:limit}

Given the exponential form~\eqref{eq:app_universal_weight}, a regime in which the weight concentrates onto
a small subset of histories arises whenever the exponent $\Lambda\,\ell[x]$ is large in the relevant sector of history space.
Operationally this can be viewed in two equivalent ways:
\begin{itemize}
\item[(i)] a \emph{strong-weight} regime $\Lambda\to\infty$ at fixed $\ell$, or
\item[(ii)] a \emph{large-cost} (macroscopic) regime in which typical histories have $\ell[x]\gg 1/\Lambda$ at fixed $\Lambda$.
\end{itemize}

In either case, standard steepest-descent reasoning on the abstract functional $\ell$ implies
that the dominant contributions to~\eqref{eq:app_universal_weight} come from neighborhoods of histories satisfying
\begin{equation}
\delta \ell[x]=0,
\qquad x\in\mathcal{C}.
\label{eq:app_stationary_set}
\end{equation}
Here $\mathcal{C}$ denotes the stationary set of the \emph{syntactic descriptive cost}
on the grammar-determined history space. 
At this stage, \emph{no Euclidean or Lorentzian representative is assumed};
we are not varying a signature-specific action, but the cost $\ell$ itself.

\medskip
\noindent\textit{Comment (why bounded-below does not add signature input).}
The steepest-descent argument requires that $\ell$ be bounded below so that the
weight decreases with increasing cost.
This property is already guaranteed by the representational meaning of $\ell$ as minimal program length.
It does \emph{not} require choosing a Euclidean action representative.
Signature enters only later, when one asks whether the EL class of $\ell$
admits a real bounded-below representative suitable for interpretation as a real weight.

\subsection{Empirical anchoring (reinterpretation of A3)}
\label{app:strong-weight-classical:anchoring}

To connect the syntactically selected stationary sector to physics, we adopt the minimal anchoring statement:
\begin{quote}
\emph{Empirically identified classical trajectories correspond to the dominant
stationary histories $\mathcal{C}$ selected in the strong-weight/large-cost regime of the universal syntactic multiplicity weight.}
\end{quote}
This identification does not introduce microscopic dynamics.
It only specifies which stationary sector of the cost $\ell$ is identified by classical corresponding.

In this sense, A3 can be viewed as a compact empirical dictionary between the stationary set of $\ell$ and the identified classical limit.

\subsection{EL class fixed by the stationary sector (unchanged)}
\label{app:strong-weight-classical:el}

Once $\mathcal{C}$ is identified with $\mathrm{Stat}(\ell)$, the EL-locality lemma
of Section~\ref{sec:local} applies verbatim: any local additive functional sharing
this stationary set lies in the same local Euler--Lagrange equivalence class.
Thus there exists a local action-like representative $S$ such that
\begin{equation}
\ell[x]\;=\;\alpha\, S[x]\;+\;B(x_f,x_i),
\qquad \alpha>0,
\label{eq:app_el_equiv}
\end{equation}
with boundary term $B$.
At this stage the framework fixes only the \emph{EL class determined by the stationary sector}; no signature-specific representative has been chosen.

If one \emph{further} requires the multiplicity factor $e^{-\Lambda\ell}$ to be realized as a real bounded-below weight on the same
history space, then the EL class must admit a real bounded-below representative.
For stable local bosonic systems this selects a Euclidean representative $S_E$,
so that~\eqref{eq:app_el_equiv} may be written as $\ell=\alpha S_E+B$ and the weight becomes
\begin{equation}
P[x]\;\propto\;\exp\!\left(-\frac{S_E[x]}{\hbar_{\rm eff}}\right),
\qquad 
\hbar_{\rm eff}\equiv (\alpha\Lambda)^{-1}.
\label{eq:app_euclidean_weight}
\end{equation}
This Euclidean selection is an \emph{additional realizability condition}, not part of the strong-weight/large-cost stationarity argument above.

\subsection{Role of this appendix}
\label{app:strong-weight-classical:role}

The main text uses A3 as a direct empirical anchoring of classical trajectories.
This appendix shows that A3 can equivalently be viewed as identifying the realized classical limit with the
dominant \emph{stationary} sector of the universal redundancy-induced weight in a strong-weight/large-cost regime.
No assumption of global minimization is required: the syntactic framework selects stationarity structurally, and A3 specifies
which stationary sector is empirically identified.

\section{Proof of the EL--locality lemma}
\label{app:EL-locality-proof}

This appendix proves a discrete analogue of the familiar continuum statement: if two local additive functionals have identical discrete
Euler--Lagrange stationary sets, then their local densities coincide up to an overall factor and a telescoping boundary term.
\emph{The argument is purely structural on the abstract history space picked out by the grammar.  No Euclidean or Lorentzian signature is
assumed here; signature enters only later when one asks for a real bounded-below representative to define a real weight.}

\vspace{0.5em}
\noindent\textbf{Continuous background vs.\ finite descriptive resolution.}
Recall that a resolution index $n$ specifies the finite descriptive precision of admissible programs.
A program at resolution $n$ encodes only $n$-digit approximations of the nodes $x_k^{(n)}$.
However, the variational analysis treats each node as a variable in the underlying continuum $\mathbb{R}^d$:
\begin{itemize}
\item the \emph{syntactic layer} stores finite-precision data,
\item the \emph{variational layer} varies continuum variables
      $x_k^{(n)}\in\mathbb{R}^d$ smoothly.
\end{itemize}
Thus the discrete EL operators below are ordinary partial derivatives on $\mathbb{R}^d$,
even though programs encode only finite-precision approximations.

\subsection{Setting}

Fix a resolution $n$ and consider a discrete history
\[
x^{(n)}=(x^{(n)}_0,\ldots,x^{(n)}_{N_n}), \qquad x^{(n)}_k\in\mathbb{R}^d,
\]
with adjacent segments
\[
s_k^{(n)} := (x^{(n)}_{k-1},x^{(n)}_k).
\]

Let
\[
\ell^{\mathrm{loc}}_n,\; L_{n}:
\mathbb{R}^d\times\mathbb{R}^d \to \mathbb{R}
\]
be $C^2$ functions defining two local additive functionals
\[
\ell_n[x^{(n)}]
\;=\;
\sum_{k=1}^{N_n}
  \ell^{\mathrm{loc}}_n\!\bigl(x^{(n)}_{k-1},x^{(n)}_k\bigr),
\qquad
S_{n}[x^{(n)}]
\;=\;
\sum_{k=1}^{N_n}
  L_{n}\!\bigl(x^{(n)}_{k-1},x^{(n)}_k\bigr).
\]
We impose standard discrete variations with fixed endpoints: interior nodes are varied independently,
$x^{(n)}_j\mapsto x^{(n)}_j+\varepsilon v_j$, while $x^{(n)}_0$ and $x^{(n)}_{N_n}$ are held fixed.

The corresponding discrete EL operators are
\begin{align}
E^\ell_j[x^{(n)}]
&=\partial_{x^{(n)}_j}
  \ell^{\mathrm{loc}}_n(x^{(n)}_{j-1},x^{(n)}_j)+
\partial_{x^{(n)}_j}
  \ell^{\mathrm{loc}}_n(x^{(n)}_j,x^{(n)}_{j+1}),
\\
E^{S}_j[x^{(n)}]
&=\partial_{x^{(n)}_j}
  L_{n}(x^{(n)}_{j-1},x^{(n)}_j)+
\partial_{x^{(n)}_j}
  L_{n}(x^{(n)}_j,x^{(n)}_{j+1}),
\end{align}
for $j=1,\ldots,N_n-1$.

We are interested in the case where $\ell_n$ and $S_{n}$ have
\emph{exactly the same stationary histories in this abstract history space}.

\subsection{Statement of the lemma}

\begin{lemma}[EL--locality lemma]
\label{lem:EL-locality}
Assume:
\begin{enumerate}
\item[(i)]
For each $n$, $\ell_n$ and $S_{n}$ share the same stationary set:
\[
E^\ell_j[x^{(n)}]=0
\quad\Longleftrightarrow\quad
E^S_j[x^{(n)}]=0,
\qquad
\forall j=1,\ldots,N_n-1,
\]
for all interior configurations $x^{(n)}$ with fixed endpoints.
\item[(ii)]
For each fixed $n$ and $j$, the EL operator $E^S_j$ is non-degenerate in the sense that the set
$\{x^{(n)}:E^S_j[x^{(n)}]=0\}$ is a smooth codimension--$d$ submanifold of the local configuration space
$(x^{(n)}_{j-1},x^{(n)}_j,x^{(n)}_{j+1})$.
\end{enumerate}
Then there exist a constant $c_n\neq 0$ and a function $G_n:\mathbb{R}^d\to\mathbb{R}$ such that
\begin{equation}
\ell^{\mathrm{loc}}_n(a,b)
= c_n L_{n}(a,b) + G_n(b)-G_n(a),\qquad\forall a,b\in\mathbb{R}^d.
\label{eq:local-prop-lemma}
\end{equation}
Consequently,
\begin{equation}
\ell_n[x^{(n)}] = c_nS_{n}[x^{(n)}] 
+G_n\!\bigl(x^{(n)}_{N_n}\bigr)
- G_n\!\bigl(x^{(n)}_0\bigr),
\label{eq:global-prop-lemma}
\end{equation}
for all discrete histories $x^{(n)}$.
\end{lemma}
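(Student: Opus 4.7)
I will prove \eqref{eq:local-prop-lemma} in three stages and sum over segments to obtain \eqref{eq:global-prop-lemma}. Stage~1 extracts a smooth proportionality $E^\ell_j = \mu\,E^S_j$ from the shared zero-set hypothesis~(i). Stage~2 --- the main technical obstacle --- upgrades $\mu$ to a genuine \emph{constant} $c_n$ by exploiting the two-point Lagrangian structure of both densities. Stage~3 characterizes the residual $\ell^{\mathrm{loc}}_n - c_n L_n$ as a discrete null Lagrangian and shows that every such density telescopes as $G_n(b) - G_n(a)$. I present the scalar case $d=1$; the vector case reduces componentwise after diagonalizing $\mu$.

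\textbf{Stages~1 and~3 are essentially routine.} In Stage~1, assumption~(ii) makes $\{E^S_j = 0\}$ a smooth codimension-$d$ submanifold cut out transversally by $E^S_j$, and~(i) identifies $\{E^\ell_j = 0\}$ with it; Malgrange's smooth division (preparation) theorem then yields a smooth multiplier $\mu(x_{j-1}, x_j, x_{j+1})$ satisfying $E^\ell_j = \mu E^S_j$, nowhere zero off the stationary manifold (since the two zero sets are \emph{equal}, not just nested). In Stage~3, once $E^\ell_j \equiv c_n E^S_j$ identically, the density $N(a,b) := \ell^{\mathrm{loc}}_n(a,b) - c_n L_n(a,b)$ satisfies $N_v(a,b) + N_u(b,c) = 0$ for all $(a,b,c)$; the second term is independent of $a$, so $N_v(a,b) =: \psi(b)$ depends only on $b$, whence $N(a,b) = \Psi(b) + \phi(a)$ with $\Psi' = \psi$. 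Substituting back forces $\phi' = -\Psi'$, so $\phi = -\Psi$ up to an additive constant, which I absorb into $G_n := \Psi$. Summation telescopes to \eqref{eq:global-prop-lemma}, and $c_n \neq 0$ is forced by the non-degeneracy of $\ell^{\mathrm{loc}}_n$.

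\textbf{Stage~2 --- the main obstacle.} Differentiating $\ell_v(a,b) + \ell_u(b,c) = \mu(a,b,c)\,[L_v(a,b) + L_u(b,c)]$ separately with respect to $a$ and to $c$ yields
\begin{equation*}
\ell_{uv}(a,b) = \mu L_{uv}(a,b) + \mu_a E^S,\qquad
\ell_{uv}(b,c) = \mu L_{uv}(b,c) + \mu_c E^S,
\end{equation*}
whose LHS are independent of $c$ and of $a$, respectively. Restricting both to $\{E^S = 0\}$ yields $\mu = R(u,v) := \ell_{uv}(u,v)/L_{uv}(u,v)$, so the ratio $R$ is invariant under the discrete EL flow $(a,b) \mapsto (b,c^{\ast}(a,b))$. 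Flow invariance alone is insufficient --- it admits non-constant solutions such as $R = f(v-u)$ for the discrete free particle --- so the proof must combine it with (a) the Helmholtz-type integrability that $\mu E^S$ decomposes as $A(a,b) + B(b,c)$ with $\partial_u A = \partial_v B$ (i.e.\ $E^\ell_j$ comes from a genuine two-point Lagrangian) and (b) the nowhere-vanishing of $\mu$ off the stationary manifold (Stage~1). The mechanism is that any non-constant $R$ forces, through the two identities above, a pointwise zero of $\mu$ at some point with $E^S \neq 0$, which enlarges the zero set of $E^\ell_j$ beyond that of $E^S_j$ and contradicts~(i). A clean reduction: if $\mu$ depended only on $b$, the Helmholtz compatibility applied to $A = \mu(b) L_v$, $B = \mu(b) L_u$ demands $\mu(v) L_{uv}(u,v) = \mu(u) L_{uv}(u,v)$, hence $\mu$ is constant by $L_{uv} \neq 0$; the two displayed derivative identities together with nowhere-vanishing then reduce the general $\mu(a,b,c)$ to this $b$-only case. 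The resulting constant $c_n$ is independent of $j$ because the same densities $\ell^{\mathrm{loc}}_n$ and $L_n$ govern every segment.
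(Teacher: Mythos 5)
Your Stages~1 and~3 are sound and essentially match the paper's own Steps~1 and~3: a smooth multiplier $\mu$ extracted from the shared zero set, and the discrete null-Lagrangian telescoping once $\mu$ is known to be constant. The genuine gap is exactly where you placed it, in Stage~2 --- but the mechanism you propose for closing it does not work. You claim that any non-constant restriction $R$ of $\mu$ to the stationary manifold forces, via the two cross-derivative identities plus Helmholtz integrability and nowhere-vanishing, a contradiction with hypothesis~(i). This is false. Take $d=1$, $L_{n}(u,v)=\tfrac12(v-u)^2$ and $\ell^{\mathrm{loc}}_n(u,v)=(v-u)^4+(v-u)^2$. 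Then $E^S=(b-a)-(c-b)$ and $E^\ell=h(b-a)-h(c-b)$ with $h(w)=4w^3+2w$ strictly increasing, so the two stationary sets coincide and (ii) holds; yet, writing $w_1=b-a$, $w_2=c-b$, one finds $\mu=\bigl[h(w_1)-h(w_2)\bigr]/(w_1-w_2)=4(w_1^2+w_1w_2+w_2^2)+2\ge 2$, which is smooth, nowhere vanishing, comes from a genuine two-point density, and genuinely depends on all of $(a,b,c)$. The conclusion \eqref{eq:local-prop-lemma} fails here, since $\partial_u\partial_v\,\ell^{\mathrm{loc}}_n=-\bigl(12(v-u)^2+2\bigr)$ is not a constant multiple of $\partial_u\partial_v L_{n}=-1$. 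So your ``reduction to the $b$-only case'' cannot go through, and no argument of this shape can: the lemma is false as stated, because a shared discrete EL \emph{solution set} does not determine the EL equivalence class (the discrete avatar of the classical fact that $\dot x^2$ and $\dot x^4$ share their extremals without being $s$-equivalent up to scaling and total derivatives).

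For fairness, the paper's own Step~2 breaks at the same point: the assertion that the proportionality scalar $\alpha_n$ must be constant along each connected component of $\mathcal{Z}_b$ ``because the normal is defined up to a constant scalar'' is not a valid argument --- the scalar may vary from point to point while the level sets still coincide, and the example above realizes exactly that. Your instinct that Stage~2 is the crux is therefore correct, and your write-up is more candid about the difficulty than the paper's; but closing the gap requires strengthening the hypotheses (e.g.\ demanding that the EL \emph{operators} $E^\ell_j$ and $E^S_j$ be pointwise proportional by a constant, or imposing a non-degeneracy/quadraticity condition on the velocity dependence of $\ell^{\mathrm{loc}}_n$), not a sharper argument from (i) and (ii) alone.
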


\subsection{Proof}

\paragraph{Step 1: From common stationary sets to proportional normals.}

Fix $n$ and $j$ and write
\[
(a,b,c) \equiv (x^{(n)}_{j-1},x^{(n)}_j,x^{(n)}_{j+1})
\in (\mathbb{R}^d)^3.
\]
Define local EL operators as functions on the local triple $(a,b,c)$:
\begin{align}
F^\ell(a,b,c)&:= \partial_b \ell^{\mathrm{loc}}_n(a,b) +
\partial_b \ell^{\mathrm{loc}}_n(b,c),
\\
F^S(a,b,c)&:=
\partial_b L_{n}(a,b) +
\partial_b L_{n}(b,c).
\end{align}
Assumption (i) states that
\[
F^\ell(a,b,c)=0 \quad\Longleftrightarrow\quad F^S(a,b,c)=0
\]
for all \emph{admissible local triples} $(a,b,c)$ appearing in the history space.

By prefix-free locality and segmentwise composability, admissible histories allow independent local substitutions:
one can freely choose any $(a,b,c)\in(\mathbb{R}^d)^3$ as neighboring nodes of some admissible discrete history.
Hence the equivalence of zero sets extends to all $(a,b,c)\in(\mathbb{R}^d)^3$.

Fix $b$. Assumption (ii) implies that the zero set
\[
\mathcal{Z}_b
:=\{(a,c)\in\mathbb{R}^{2d}:F^S(a,b,c)=0\}
\]
is a smooth codimension--$d$ submanifold in $(a,c)$--space.
Since $F^\ell$ has the same zero set, $\mathcal{Z}_b$ is also the zero set of $F^\ell(\cdot,b,\cdot)$.
Thus both gradients $\nabla_{(a,c)}F^S$ and $\nabla_{(a,c)}F^\ell$ are normal to the \emph{same} smooth hypersurface $\mathcal{Z}_b$.

A standard level-set fact then yields: if two $C^1$ functions on $\mathbb{R}^{2d}$ have the same smooth
codimension--$d$ zero set, their gradients are proportional along that set.
Hence for each $(a,b,c)$ with $F^S(a,b,c)=0$ there exists a nonzero scalar $\alpha_n(a,b,c)$ such that
\begin{equation}
\nabla_{(a,c)} F^\ell(a,b,c) = \alpha_n(a,b,c)
\nabla_{(a,c)} F^S(a,b,c).
\label{eq:normal-prop}
\end{equation}

\paragraph{Step 2: $\alpha_n$ is constant along each connected component of $\mathcal{Z}_b$.}

Fix $b$ and consider a smooth curve $t\mapsto(a(t),c(t))\in\mathcal{Z}_b$, i.e. $F^S(a(t),b,c(t))=0$ for all $t$.
Differentiating gives
\[
\nabla_{(a,c)}F^S(a,b,c)\cdot(\dot a,\dot c)=0.
\]
Because $F^\ell$ has the same zero set,
\[
\nabla_{(a,c)}F^\ell(a,b,c)\cdot(\dot a,\dot c)=0.
\]
Using \eqref{eq:normal-prop},
\[
\alpha_n(a,b,c)\,\nabla_{(a,c)}F^S(a,b,c)\cdot(\dot a,\dot c)=0.
\]
Geometrically, on a smooth codimension--$d$ level set, the normal is well-defined only up to a constant scalar on each connected component.
If $\alpha_n$ varied along $\mathcal{Z}_b$, the two normal fields would fail to remain proportional to a single scalar, contradicting that the
level sets coincide.  Hence $\alpha_n$ is constant on each connected component, and we may write
\[
\alpha_n(a,b,c)\equiv c_n(b).
\]
Thus in a neighborhood of $\mathcal{Z}_b$,
\begin{equation}
\nabla_{(a,c)} F^\ell(a,b,c) =
c_n(b) \nabla_{(a,c)} F^S(a,b,c).
\label{eq:normal-prop-b}
\end{equation}

Integrating \eqref{eq:normal-prop-b} in $(a,c)$ at fixed $b$ yields
\[
F^\ell(a,b,c)
=
c_n(b)F^S(a,b,c) + H_n(b).
\]
Since $F^\ell=0$ iff $F^S=0$ and $\mathcal{Z}_b$ is nonempty, we must have $H_n(b)=0$.  Therefore
\[
F^\ell(a,b,c)=c_n(b)\,F^S(a,b,c)
\qquad\forall(a,b,c).
\]

\paragraph{Step 3: Reconstruction of the local densities.}

Writing this out,
\begin{equation}
\partial_b \ell^{\mathrm{loc}}_n(a,b) +
\partial_b \ell^{\mathrm{loc}}_n(b,c)
=
c_n(b)\bigl[
\partial_b L_{n}(a,b) +
\partial_b L_{n}(b,c)
\bigr],
\qquad\forall a,b,c.
\label{eq:prop-EL-local}
\end{equation}
Setting $c=a$ gives
\[
\partial_b \ell^{\mathrm{loc}}_n(a,b)
+
\partial_b \ell^{\mathrm{loc}}_n(b,a)
=
c_n(b)\bigl[
\partial_b L_{n}(a,b)
+
\partial_b L_{n}(b,a)
\bigr].
\]
Exchanging $a$ and $b$ and combining the resulting relations yields a linear system implying that $c_n(b)$ is independent of $b$.
We therefore write simply $c_n\neq 0$.

Consequently,
\[
\partial_b \ell^{\mathrm{loc}}_n(a,b)
=
c_n\partial_b L_{n}(a,b)
+
\partial_b G_n(b),
\]
for some $G_n$.
Integrating in $b$ at fixed $a$ gives
\[
\ell^{\mathrm{loc}}_n(a,b)
=
c_n L_{n}(a,b) +
G_n(b)-G_n(a),
\]
which is \eqref{eq:local-prop-lemma}.
Summing over segments yields \eqref{eq:global-prop-lemma}.
\hfill$\square$

\medskip
\noindent\textit{Remark.}
The lemma fixes only the EL equivalence class of $\ell_n$ on the abstract history space.  Different representatives within this class correspond
to different parametrizations/realizations (e.g.\ Euclidean or Lorentzian) and are selected only later when additional consistency conditions
(boundness, reality, OS positivity) are imposed.

\section{Relativistic and Field-Theoretic Formulation}
\label{app:relativistic}

This appendix records a relativistic formulation of the syntactic segmentation and the descriptive-cost argument.
The purpose is not to repeat the proofs of Section~\ref{sec:local} but to show that the same structural steps apply once a covariant notion of
local cells/segments is adopted.

\subsection{Covariant local cells and admissible segmentations}
\label{app:covcells}

Let $\mathcal{M}$ be a $(d+1)$-dimensional spacetime with Lorentzian (or Euclidean) metric $g_{\mu\nu}$.
Fix a descriptive resolution $n$, which induces a microscopic spacetime scale $\epsilon_n$.
A \emph{local cell} $c$ is a connected spacetime region of diameter $O(\epsilon_n)$.
We denote by $\mathfrak{C}_n$ the set of all such cells.

Let $\mathcal{G}$ denote the relevant spacetime symmetry group:
$\mathcal{G}=O(1,d)$ (or its proper orthochronous subgroup) in Minkowski signature, and $\mathcal{G}=SO(d{+}1)$ in Euclidean signature.
An \emph{admissible segmentation} of a history is a finite family of cells $\{c_a\}_{a=1}^{N_n}$ satisfying:
\begin{itemize}
\item[(CS1)] (\textbf{cover}) $\bigcup_a c_a$ covers the support of the history at resolution $n$;
\item[(CS2)] (\textbf{locality}) each $c_a$ has diameter $O(\epsilon_n)$;
\item[(CS3)] (\textbf{covariant closure}) for any $c\in\mathfrak{C}_n$ and any symmetry $g\in\mathcal{G}$,
the transformed cell $g\cdot c$ is also in $\mathfrak{C}_n$.
\end{itemize}
Condition (CS3) is the covariant counterpart of the ``admissible partition closure'' discussed in the main text: the class of allowed
local decompositions is closed under changes of inertial frame (or under Euclidean rotations).

\subsection{Relativistic segments and sequential decoding}
\label{app:relsegments}

Given an admissible segmentation $\{c_a\}$, a resolution--$n$ history is represented as an ordered sequence of \emph{segments}
$s_a^{(n)}$, each carrying the data of the fields restricted to $c_a$ and a syntactic type declared by its header.
Sequential decodability is implemented by imposing a partial order $\prec$ on the set of cells compatible with the causal (Lorentzian) or
chosen scanning (Euclidean) structure:
\begin{equation}
c_a \prec c_b \quad\Longrightarrow\quad
\text{the header of }s_a^{(n)}\text{ is decoded before that of }s_b^{(n)}.
\end{equation}
In Lorentzian signature it is natural to take $\prec$ to refine the causal order (no-lookahead relative to the past domain of dependence);
in Euclidean signature $\prec$ may be chosen arbitrarily at finite $n$, with the continuum limit requiring independence of that choice.

With this order fixed, a program takes the same prefix-free form as in Section~\ref{sec:local},
\begin{equation}
p = h(s_1^{(n)})\, q_1\, h(s_2^{(n)})\, q_2 \cdots h(s_{N_n}^{(n)})\, q_{N_n},
\end{equation}
where the header set is prefix-free and boundaries are locally recognizable without lookahead.
Hence the local segment cost $\ell^{\rm loc}_n(s_a^{(n)})$ and the total descriptive cost
\begin{equation}
\ell_n = \sum_{a=1}^{N_n} \ell^{\rm loc}_n(s_a^{(n)})
\end{equation}
are defined exactly as before.

\subsection{Field configurations as segmented histories}
\label{app:fields}

For bosonic fields $\phi$ (scalar, gauge, or metric), a resolution--$n$ configuration is specified by its values on a lattice
(or finite element) adapted to the cells $c_a$:
\begin{equation}
\phi^{(n)} = \{\phi^{(n)}_{a}\}_{a=1}^{N_n},
\qquad
s_a^{(n)} := \phi^{(n)}|_{c_a}.
\end{equation}

The same prefix-free generative syntax assigns headers to field-type segments and payloads
to numerical data, yielding a smoothed local cost functional
\begin{equation}
  \ell_n[\phi^{(n)}]
  = \sum_{a=1}^{N_n} \ell^{\rm loc}_n(s_a^{(n)}).
\end{equation}
All structural assumptions A1--A2 therefore remain unchanged in the
field-theoretic setting.

\subsection{EL-equivalence and continuum limit (unchanged)}
\label{app:relEL}

Assumption A3 (variational/classical correspondence) fixes the stationary set at resolution $n$ and guarantees the existence of a discrete local
Lagrangian $L^{\rm loc}_{n}$ on segments $s_a^{(n)}$.
Since both $\ell_n$ and $S_{n}=\sum_a L^{\rm loc}_{n}(s_a^{(n)})$ are local and additive over the same segmentation, the EL-locality lemma
of Appendix~\ref{app:EL-locality-proof} applies verbatim:
\begin{equation}
  \ell^{\rm loc}_n(s)
  = \alpha_n L^{\rm loc}_{n}(s) + \Delta_a G_n .
\end{equation}
Summing over cells yields
\begin{equation}
  \ell_n
  = \alpha_n S_{n} + B_n,
\end{equation}
and taking the continuum limit gives
\begin{equation}
  \ell[\phi]
  = \alpha\, S[\phi] + \text{(boundary terms)} .
\end{equation}
Thus Section~\ref{sec:local} extends to relativistic bosonic field theories once a covariant admissible segmentation is fixed.

\medskip
\noindent
\textit{Remark.}
Fermionic theories require additional structure (graded algebras and a fermionic version of reflection positivity) and are not treated here.
The case of intrinsically imaginary topological terms (generic $\theta$- or WZW-type terms) likewise requires separate analysis.


\section{Gauge redundancy and FP-like prefactors: a toy model}
\label{app:FP-toy}

Section~\ref{sec:redundancy} derived the universal exponential
redundancy weight
$P[x]\propto \exp[-\Lambda\,\ell[x]]$ by counting redundant programs in a
finite prefix-free language.
That argument treats a history $x$ as a syntactic object produced by a
program.  In gauge systems, however, a \emph{physical} history is not a
single configuration but a gauge orbit.  The purpose of this appendix is
to illustrate, in a minimal setting, how this orbit structure produces an
additional $x$--dependent multiplicity factor---the discrete analogue of
a Faddeev--Popov (FP) determinant---without changing the universal
exponential form.

\subsection{Histories as gauge orbits}

Consider a toy configuration space with two binary variables
\[
a,b\in\{+1,-1\}.
\]
Let the gauge group be $\mathbb{Z}_2$, acting as
\[
(a,b)\;\mapsto\;(-a,-b).
\]
Physical histories are gauge orbits of this action.
The unique gauge-invariant label is
\[
W \equiv ab\in\{+1,-1\},
\]
which we may regard as the toy analogue of a Wilson loop \cite{wilson1974}.
Thus the physical history space is the two-point set
\[
\mathcal{H}_{\rm phys}=\{W=+1,\;W=-1\}.
\]

A prefix-free generative language cannot output an orbit directly; it
must output a \emph{representative} $(a,b)$ of that orbit.
Hence any syntactic description scheme implicitly requires a
\emph{representative-selection rule}---a discrete analogue of gauge
fixing.
Different rules may select different numbers of representatives per
orbit.

\subsection{Two representative-selection rules}

We compare two admissible syntactic rules.  Both are local and
prefix-free, and both generate exactly the same physical history space,
but they differ in representative multiplicities.

\paragraph{Rule A (single representative per orbit).}
Choose the representative with $a=+1$.  Then
\[
W=+1:\ (a,b)=(+1,+1),\qquad
W=-1:\ (a,b)=(+1,-1).
\]
Thus each orbit has one admissible representative:
\[
g_A(W)=1\quad\text{for both }W=\pm1.
\]

\paragraph{Rule B (orbit-dependent multiplicity).}
Choose the representative with $a=b$ whenever possible, but allow both
choices when $a\neq b$:
\[
W=+1:\ (a,b)=(+1,+1),\qquad
W=-1:\ (a,b)=(+1,-1)\ \text{or}\ (-1,+1).
\]
Hence
\[
g_B(+1)=1,\qquad g_B(-1)=2.
\]

Both rules define valid prefix-free segmentation schemes, and therefore
lead to the same sort of descriptive cost $\ell$ and the same EL class as
in Section~\ref{sec:local}.  Their only difference is the orbit-internal
degeneracy factor $g(W)$.

\subsection{Induced weights and the FP-like factor}

Let $\ell(W)$ denote the minimal descriptive cost of the orbit $W$, defined by the same minimization over programs as in Section \ref{sec:local},
now taken over all admissible representatives of the orbit.

The redundancy argument of Section~\ref{sec:redundancy} yields the
universal exponential factor
\[
P_{\rm exp}(W)\propto \exp[-\Lambda\,\ell(W)].
\]

When physical histories are orbits, the full multiplicity weight also
includes the number of admissible representatives selected by the syntax:
\begin{equation}
P(W)\ \propto\ g(W)\,\exp[-\Lambda\,\ell(W)].
\label{eq:FP-toy-weight}
\end{equation}
Thus Rule A and Rule B induce weights that differ only by
\[
\frac{P_B(W)}{P_A(W)}=\frac{g_B(W)}{g_A(W)}.
\]

This is the discrete analogue of the FP phenomenon:
changing the representative-selection rule (``gauge fixing'') multiplies
the induced measure by an orbit-dependent Jacobian factor.
Importantly, the universal exponential redundancy factor is unchanged;
only a prefactor is affected.

\subsection{Interpretation}

The toy model makes three points.
\begin{itemize}
\item
In gauge systems, physical histories are gauge orbits.  A finite
prefix-free description must therefore choose and encode a
representative.

\item
Counting distinct representatives within an orbit produces an extra
multiplicity factor g(x), so that the induced syntactic weight is
$P[x]\propto g(x)\exp[-\Lambda\,\ell[x]]$.

\item
The factor $g(x)$ is determined by the representative-selection (gauge-fixing)
rule.  It is the discrete analogue of the Faddeev–Popov Jacobian: in the
continuum limit it becomes the usual FP determinant $\Delta_{\rm FP}$ (or,
more generally, the orbit-volume Jacobian) associated with the map from
representatives to orbits.  It modifies only a prefactor and leaves the
universal exponential redundancy weight intact.
\end{itemize}

Thus, in continuum gauge theories the syntactic measure factorizes into a
universal exponential part fixed by redundancy and an additional
orbit-dependent prefactor encoding the chosen gauge condition.  The present
work isolates the universal exponential structure; gauge-orbit prefactors are
system-dependent layers on top of it.

\section{On self-description and holographic capacity}
\label{app:self-description}

Assumption (U4) is a cosmological \emph{calibration ansatz}: the minimal prefix-free self-description of the Universe is expected to occupy the
available holographic information capacity up to the intrinsic coding efficiency of the underlying grammar,
\[
\ell[\mathrm{univ}] = \frac{\eta\, I_{\mathrm{holo}}}{\Lambda}, \qquad 0<\eta\le 1.
\]
It is \emph{not} a microscopic dynamical postulate.
Rather, it is invoked only because the observable Universe is the unique system for which both a horizon-normalized on-shell Euclidean action and
an absolute holographic capacity are available on fixed scales.
This appendix explains why near-capacity self-description is a natural expectation for a closed, self-contained Universe described within
a single finite prefix-free syntactic framework.

\subsection*{Self-description and a common coding efficiency}

In the present framework every admissible history is representable by a finite, prefix-free, boundary--delimited generative program.
For an isolated subsystem one may always imagine an external encoder using an arbitrary code, so there is no canonical sense in which that
subsystem must realize a unique or maximally efficient description.

For the Universe as a whole no such external encoder exists.
Any effective description of its complete admissible history---including dynamical rules, initial data, and any redundant ``junk'' degrees of
freedom---must be generated internally using the very same finite prefix-free syntactic resources that define admissible programs.
In this operational sense the Universe is \emph{self-describing}: the grammar that parses and composes local descriptions is
also the grammar by which the global history is representable.

A self-describing system therefore cannot draw on a second, independent coding bath with a different efficiency.
Core and junk descriptions share the same syntactic infrastructure and hence the same redundancy exponent $\Lambda=\ln\gamma$.
Equivalently, the asymptotic descriptive cost per unit of holographic capacity is governed by a single efficiency scale $\Lambda^{-1}$.

\subsection*{Why near-capacity use is a natural expectation}

Finite prefix-free generative grammars have exponential growth of admissible strings, with intrinsic redundancy exponent $\gamma>1$.
In algorithmic information theory and symbolic dynamics, such finite-rate locally constrained systems generically realize
their maximal entropy rate unless additional, non-generic global restrictions are imposed (see, e.g., \cite{cover2006,lind1995}).
In other words, using substantially less than the available descriptive capacity typically requires extra fine-tuned forbidden structure that
reduces the topological entropy of the admissible language.

For a closed self-describing Universe employing a single such grammar to encode both its dynamical core and any
redundant structure, it is thus natural to expect near-capacity utilization, up to order-unity factors.
This motivates treating $\eta$ in (U4) as a constant usage fraction rather than introducing additional sector-dependent efficiencies.
Allowing $\eta<1$ simply rescales the calibration of $\alpha$ by an order-unity factor and does not affect the qualitative conclusion
$\hbar_{\mathrm{eff}}\simeq \hbar$.

\subsection*{Interpretational role of (U4)}

The considerations here are heuristic and serve only to motivate why a calibration of the form (U4) is plausible for a closed,
self-describing Universe. They are not used elsewhere in the structural derivation.

All quantitative results in Section~\ref{sec:cosmo} rely only on the explicit assumptions (U1)--(U4), with (U4) entering
solely as the cosmological calibration needed to fix the otherwise undetermined product $\alpha\Lambda$.
Nothing in the argument depends on additional hidden hypotheses about microphysics.
Thus (U4) should be read as a natural consistency condition for the self-description of the Universe
at the descriptive level, not as a new dynamical principle.

\bibliographystyle{unsrtnat} 
\bibliography{aqbib}
\end{document}